\documentclass[11pt]{article}
\usepackage{fullpage}
\usepackage{amsmath,amsthm,amsfonts,amssymb,graphicx,url}

\newcommand{\map}[3]{{{#1}:{#2}\rightarrow{#3}}}

\newcommand{\p}{\varphi}
\newcommand{\myand}{\;\mathrel{\&}\;}

\newcommand{\eps}{\varepsilon}

\newcommand{\cmpl}{\setminus}

\newcommand{\rotl}[2]{{\textup{RotL}_{{#1}}({#2})}}
\newcommand{\h}{\textup{halt}}

\theoremstyle{definition}
\newtheorem{definition}{Definition}

\theoremstyle{plain}
\newtheorem{theorem}[definition]{Theorem}
\newtheorem{lemma}[definition]{Lemma}

\newtheorem{corollary}[definition]{Corollary}
\newtheorem{claim}[definition]{Claim}

\newcommand{\halt}{{q_\h}}
\newcommand{\tuple}[1]{{\langle{#1}\rangle}}
\newcommand{\tmark}[2]{{[{#1},\downarrow\!{#2}]}}

\title{The complexity of some regex crossword problems}
\author{Stephen A. Fenner \\
University of South Carolina\thanks{Computer Science and Engineering Department, Columbia, SC 29208 USA\@.} \\
\texttt{fenner.sa@gmail.com}}

\bibliographystyle{alpha}

\begin{document}

\maketitle

\begin{abstract}
In a typical \emph{regular expression (regex) crossword} puzzle, you are given two nonempty lists $R_1,\ldots,R_m$ and $C_1,\ldots,C_n$ of regular expressions over some alphabet, and your goal is to fill in an $m\times n$ grid with letters from that alphabet so that the string formed by the $i$th row is in $L(R_i)$, and the string formed by the $j$th column is in $L(C_j)$, for all $1\le i\le m$ and $1\le j\le n$.  Such a grid is a \emph{solution} to the puzzle.  It is known that determining whether a solution exists is NP-complete.  We consider a number of restrictions and variants to this problem where all the $R_i$ are equal to some regular expression $R$, and all the $C_j$ are equal to some regular expression $C$.  We call the solution to such a puzzle an \emph{$(R,C)$-crossword}.  Our main results are the following:
\begin{enumerate}
\item
There exists a fixed regular expression $C$ over the alphabet $\{0,1\}$ such that the following problem is NP-complete: ``Given a regular expression $R$ over $\{0,1\}$ and positive integers $m$ and $n$ given in unary, does an $m\times n$ $(R,C)$-crossword exist?''  This improves the result mentioned above.
\item
The following problem is NP-hard: ``Given a regular expression $E$ over $\{0,1\}$ and positive integers $m$ and $n$ given in unary, does an $m\times n$ $(E,E)$-crossword exist?''
\item
There exists a fixed regular expression $C$ over $\{0,1\}$ such that the following problem is undecidable (equivalent to the Halting Problem): ``Given a regular expression $R$ over $\{0,1\}$, does an $(R,C)$-crossword exist (of any size)?''
\item
The following problem is undecidable (equivalent to the Halting Problem): ``Given a regular expression $E$ over $\{0,1\}$, does an $(E,E)$-crossword exist (of any size)?''
\end{enumerate}
\end{abstract}

Keywords: complexity, decidability, undecidability, regular expression, regex crossword, NP-complete, two-dimensional language, picture language

\section{Introduction}

Regular expression crossword puzzles (regex crosswords, for short) share some traits in common with traditional crossword puzzles and with sudoku.  One is typically given two lists $R_1,\ldots,R_m$ and $C_1,\ldots,C_n$ of regular expressions labeling the rows and columns, respectively, of an $m\times n$ grid of blank squares.  The object is to fill in the squares with letters so that each row, read left to right as a string, \emph{matches} (i.e., is in the language denoted by) the corresponding regular expression, and similarly for each column, read top to bottom.  The solution itself may have some additional property, e.g., spelling out a phrase or sentence in row major order.
\begin{figure}
\begin{center}
\input{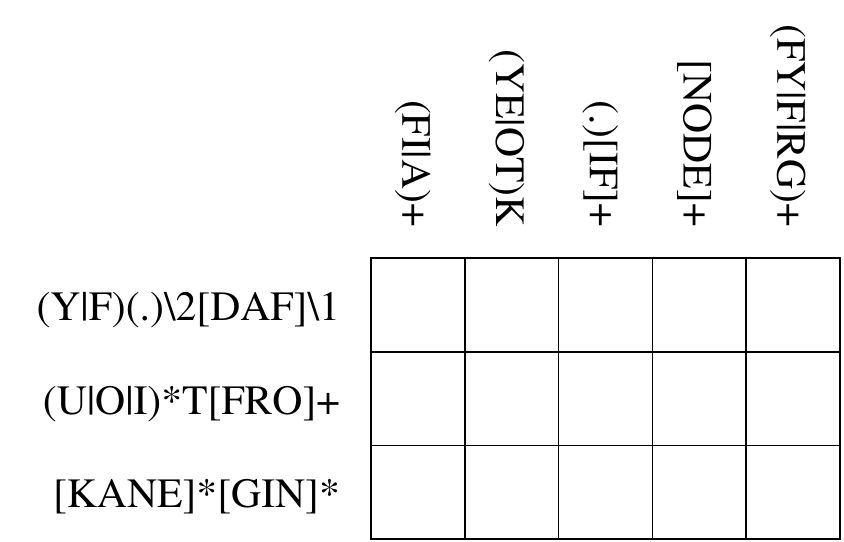_t}
\end{center}
\caption{A regex crossword.  The alphabet is $\{A,B,\ldots,Z\}$.  The expression syntax includes: character classes $[\cdots]$, which are matched by any single letter between the brackets; the period ``.'', which matches any single letter; back references $\backslash 1$ and $\backslash 2$, which are matched by whatever string matched the first (respectively second) parenthesized subexpression.}\label{fig:royal-dinner}
\end{figure}
Figure~\ref{fig:royal-dinner} shows a $3\times 5$ regex crossword (``Royal Dinner'') from the website \url{regexcrossword.com}, with a unique solution \cite{royal-dinner}.

Regex crosswords have enjoyed some recent popularity, having been discussed in several popular media sources \cite{mikejuk:regex-crossword,Black:regex-crossword}, and thanks to some websites where people can solve the puzzles online \cite{rc,rcs}.  There are variants of the basic puzzle, including having two regular expressions for each row and column, one to match each of the two opposite directions \cite{rcs}.  Another variant is a hexagonal grid made up of hexagonal cells, with regular expressions for each of three separate directions, created by Dan Gulotta from an idea by Palmer Mebane (see \cite{Black:regex-crossword}), that showed up as part of the 2013 MIT Mystery Hunt \cite{MIT-Mystery-Hunt,Black:regex-crossword,mikejuk:regex-crossword}.

A natural complexity theoretic question to ask is: How hard is it to solve a regex crossword in general?  In September 2014, Glen Takahashi asked on StackExchange whether this task is NP-hard \cite{Takahashi:regex-crossword}.  (The same question has been asked by other people).  A positive answer to his query, along with a proof, was posted by FrankW about a half hour later (see Appendix~\ref{sec:NP-hard}, which includes a different proof found previously and independently by the author).  Another post observed the next day that the solution existence problem is in NP.

In this paper, we determine the complexity and decidability of several problems related to restricted regex crosswords.  In Section~\ref{sec:complexity}, we show that two restrictions of the problem above remain NP-hard: (1) when all the row expressions are equal to each other and all the column expressions are equal to a fixed expression, independent of the input; and (2) when all the expressions (both row and column) are equal.  In both problems above, the dimensions of the crossword are given in unary.  But first, in Section~\ref{sec:decidability}, we consider the problem of existence of crosswords of any size.  We show that this question and related ones are equivalent to the Halting Problem and are thus undecidable.  (We also show that if only one of the two dimensions is unbounded, then the problem becomes decidable, in fact, in PSPACE.)  Thus, in the spirit of the Post Correspondence Problem and questions about context-free languages, we have another simple yet undecidable problem in automata theory, one accessible to any undergraduate theory student.  The proofs given here are just as accessible, and their ideas carry over to the complexity theoretic setting of Section~\ref{sec:complexity}.

The undecidability results we prove in Section~\ref{sec:decidability} all follow from Lemma~\ref{lem:HP-to-RC}, which uses the crossword to encode a halting computation of a one-tape Turing machine as a two-dimensional tableau: one dimension for space, the other for time.  The row regular expression enforces consistency within each configuration, and the row and column regular expressions together enforce the legality of the machine's transitions.  In some sense, these are age-old techniques (albeit with a few new twists); Lemma~\ref{lem:HP-to-RC} could have been proved half a century ago.  They bear some similarity to results in cellular automata, to the Cook-Levin theorem, and to results of Berger from the 1960s showing the undecidability of tiling the plane with Wang tiles (the so-called ``domino problem'' \cite{Berger:domino-problem}, which was the first proof that there exist finite tile sets that tile the whole plane but only aperiodically).  Berger's construction is quite complicated, and although it may be possible to harness his result to prove our Lemma~\ref{lem:HP-to-RC}, our proof is direct enough to stand on its own.

The results of Section~\ref{sec:decidability} perhaps have their closest connection to the theory of two-dimensional languages (picture languages) in formal language theory \cite{GR:2d-languages}.  In fact, one can show that the recognizable picture languages coincide exactly with the letter-to-letter projections of $(R,C)$-crosswords \cite[Theorem~8.6]{GR:2d-languages} (except that the empty picture may also be included in the language).  Recognizable picture languages can be defined in terms of finite objects known as tiling systems \cite{GR:rec-picture-languages} (cf.\ \cite[Definition~7.2]{GR:2d-languages}), and given a tiling system $\mathcal{T}$, it is not hard to show that one can effectively find two regular expressions $R$ and $C$ (over some alphabet) and a projection $\pi$ that defines the same picture language as $\mathcal{T}$.  The existence problem for recognizable picture languages (``Given a tiling system, does it define a nonempty language?'') is known to be undecidable (\cite[Theorem~9.1]{GR:2d-languages}), and so, putting these results together, we get that the existence problem for $(R,C)$-crosswords is undecidable as well.  This essentially proves most of the results we give in Section~\ref{sec:decidability}, below, or at least weaker versions of them.  However, the proof we give for Lemma~\ref{lem:HP-to-RC} gives a much more direct reduction from the halting problem to $(R,C)$-crossword existence than what can be put together using the results in \cite{GR:2d-languages}.  In particular, we can fix the alphabet and even the expression $C$, independent of the input.  We will also need details of the proof later in the paper.

%There are also some similarities to two-dimensional (``picture'') languages in formal language theory \cite{GR:2d-languages} and to 2-dimensional finite automata (\textsf{2DIM-DFA}s \cite{BH:2d-automata}, see \cite[Exercise~5.27, page~241]{Sipser:theory3}), which take a two-dimensional ``word'' as input and are allowed to move left, right, up, and down.  The emptiness problem for these automata is undecidable, because the tableau of a Turing machine's halting computation can be verified by such an automaton.  If our undecidability results are cast in the same light, then they say essentially that the emptiness problem remains undecidable even for a very restricted form of 2-d automaton---one that checks rows and columns independently from one another.

In the undecidability results of Section~\ref{sec:decidability} just described, the alphabet we use depends on the Turing machine being simulated and may be quite large if the machine recognizes the Halting Problem (e.g., a universal machine).  In Section~\ref{sec:binary}, which is the heart of the paper, we show that restricting to a binary alphabet does not reduce the complexity of any of our problems.  We do this by giving a polynomial-time function mapping a regular expression over any alphabet to one over a binary alphabet in a way that preserves the existence (and nonexistence) of crosswords (Lemma~\ref{lem:binary-alphabet}).  This turns out to be the most difficult result of the paper, using a surprisingly delicate construction that was arrived at only after many failed attempts.

We give a number of open problems in Section~\ref{sec:open}.

\section{Preliminaries}

For any integers $a$ and $b$ with $b > 0$, we define $a\bmod{b}$ to be the unique integer $r$ such that $0\le r < b$ and $a\equiv r\pmod{b}$.

Our notation for regular expressions is standard (see \cite{Sipser:theory3}), except that in addition to the three usual operations---union ($\cup$), concatenation (juxtaposition), and Kleene-closure ($*$)---we also allow intersection ($\cap$), so that $L(r\cap s) = L(r)\cap L(s)$ for any regular expressions $r$ and $s$.  We treat the $\cap$ operation only as syntactic sugar and not part of the formal definition of a regular expression; it can be effectively removed from a regular expression to obtain an equivalent one by standard techniques \cite{HMU:theory3,Sipser:theory3}.  We also use the unary ``$+$'' and ``$?$'' operators as syntactic sugar: ``$E^+$ is shorthand for $EE^*$, and ``$E?$'' is shorthand for $E\cup\eps$, where $\eps$ is matched by the empty string only.  We will also identify any finite set of strings with its corresponding regular expression.  We say a string $w$ \emph{matches} a regular expression $R$ to mean that $w\in L(R)$.

\begin{definition}
Let $\Sigma$ be an alphabet.  A \emph{$\Sigma$-crossword} is a nonempty two-dimensional array $X$ of symbols from $\Sigma$ (say, $m\times n$, where $m,n\ge 1$).  The symbol in the $i$th row and $j$th column is usually denoted $x_{i,j}$ for $1\le i\le m$ and $1\le j\le n$.  Let $R$ and $C$ be any regular expressions over $\Sigma$.  $X$ is an \emph{$(R,C)$-crossword}  iff each row of $X$, read left to right, matches $R$, and each column of $X$, read top to bottom, matches $C$; that is, for all $1\le i\le m$, the string $x_{i,1}x_{i,2}\cdots x_{i,n}$ is in $L(R)$ and for all $1\le j\le n$, the string $x_{1,j}x_{2,j}\cdots x_{m,j}$ is in $L(C)$.
\end{definition}

We may call a $\Sigma$-crossword simply a \emph{crossword} if $\Sigma$ is not relevant or is clear from the context.

The next definition is for purely technical reasons.  Removing these restrictions does not affect our complexity results.

\begin{definition}
We say that a regular expression is \emph{positive} iff it is not matched by the empty string.  A pair $(R,C)$ of regular expressions is \emph{plural} iff both $R$ and $C$ are positive and every $(R,C)$-crossword has at least two rows and at least two columns.
\end{definition}

Given two regular expressions $R$ and $C$, one can decide in polynomial time whether or not $R$ is positive and whether or not $(R,C)$ is plural.

We abbreviate ``computably enumerable'' (a.k.a.\ recursively enumerable) by ``c.e.''  Our notion of m-reduction (mapping reduction) and polynomial reduction come from Sipser \cite{Sipser:theory3}.

\section{Undecidability}
\label{sec:decidability}

In this section, we prove that, given a regular expression $R$, it is undecidable whether an $(R,C)$-crossword exists, for some fixed regular expression $C$.  We also show that it is undecidable whether an $(R,R)$-crossword exists.  As mentioned in the introduction, it is already known that, given \emph{both} regular expressions $R$ and $C$, determining whether an $(R,C)$-crossword exists is undecidable \cite{GR:2d-languages}.

%\begin{theorem}\label{thm:RC-is-undecidable}
%Given an alphabet $\Sigma$ and regular expressions $R$ and $C$ over $\Sigma$, it is undecidable (in fact, equivalent to the Halting Problem) whether or not an $(R,C)$-crossword exists.
%\end{theorem}

We reduce from the Halting Problem.  Our computational model---a slight modification of that found in many textbooks, e.g., \cite{Sipser:theory3}---is that of a deterministic Turing machine with a unique halting state (distinct from the start state) and a single two-way infinite tape whose initial contents is an input string $w$ of nonblank symbols, surrounded on both sides with blank tape.  In each step, the tape head must move either left or right by one cell.  The crossword to be filled in encodes the tableau of a halting computation.  Each symbol in the crossword represents the contents of a tape cell at a certain time in the computation, possibly with some extra information about the state of the machine and the position of the head.  The expression $R$ ensures that the whole configuration of the TM is legitimate at each time step, and $C$ ensures that the contents of each tape cell is correct over time.  We view the tableau with the initial configuration on the top row and time moving downward.

One might think that, in order to handle transitions correctly, a crossword symbol should represent a ``window'' in the tableau, spanning perhaps two or three adjacent tape cells at two adjacent time steps, and that these windows should overlap consistently.  It is possible to do this, but it turns out to be unnecessary; we use a trick whereby the machine's transition information is passed in two directions---first horizontally (checked by $R$), then vertically (checked by $C$).  (This idea is somewhat analogous to the characterization of recognizable picture languages via domino systems and hv-local languages \cite{LS:domino-tiling}.)

Both results of this section use the following lemma:

\begin{lemma}\label{lem:HP-to-RC}
Let $M$ be a Turing machine (as described above).  There exists an alphabet $\Sigma$ and a regular expression $C := C(M)$ over $\Sigma$ (both depending on $M$), and for any input string $w$ there exists a regular expression $R := R(M,w)$ over $\Sigma$ (depending on $M$ and $w$) such that $(R,C)$ is plural, and $M$ halts on input $w$ if and only if an $(R,C)$-crossword exists, and if this is the case, then the $(R,C)$-crossword is unique.  Furthermore, $R$ is computable from $M$ and $w$ in polynomial time, and $C$ is computable from $M$.
\end{lemma}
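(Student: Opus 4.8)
The plan is to encode a halting computation of $M$ on $w$ as a two-dimensional tableau, with the initial configuration on the top row and time increasing downward, so that row $i$ of the crossword represents the machine's configuration after $i-1$ steps. The central design decision is the alphabet $\Sigma$: each symbol must carry enough information to let a purely local check (one per row via $R$, one per column via $C$) verify both that each row is a legal configuration and that consecutive rows are related by a legal transition. Following the ``pass the transition information in two directions'' trick described before the lemma, I would let each tableau symbol encode the tape-cell contents together with an optional marker indicating the head position and current state. Crucially, I would have the symbol in each cell also carry, as an extra component, the transition information that will be needed by the cell \emph{below and adjacent} to it on the next time step. The idea is that $R$ first distributes this information horizontally within a row (checking that exactly one head marker is present, that the state is consistent, and that the outgoing transition data is correctly placed in the cells neighboring the head), and then $C$ uses the vertically adjacent pair of symbols to check that each cell's next-step contents is exactly what the transition data demands.

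The concrete construction I would carry out in order is as follows. First I would fix $\Sigma$ to consist of symbols recording (tape symbol, head/state information, and a small bundle of ``incoming transition'' fields from the left and right neighbors), with $\Sigma$ depending only on $M$'s states, tape alphabet, and transition function---hence $C = C(M)$ depends only on $M$. Second, I would write $C$ so that for each column, reading top to bottom, $C$ verifies the temporal evolution of a single tape cell: given the current symbol and the transition-information fields, the symbol directly below it must be the uniquely determined successor (a blank or unchanged cell stays the same; a cell the head moves into acquires the new state marker; a cell the head writes to gets the written symbol). Because $M$ is deterministic with a unique halt state, these column constraints are deterministic and force uniqueness. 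Third, for the input $w$, I would build $R = R(M,w)$ so that the top row is pinned to the exact initial configuration for $w$ (blanks, then $w$ with the head on its first symbol in the start state, then blanks, padded to width $n$), every subsequent row is checked to be a syntactically legal configuration (exactly one head marker, consistent state, correct propagation of the horizontal transition fields), and the halting state is required to appear somewhere---so that a crossword can close off only if and when $M$ halts.

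I would then verify the biconditional in both directions. If $M$ halts on $w$ in $T$ steps using at most $S$ tape cells, the honest tableau (suitably padded in width to contain the head's excursions and in height to reach the halting configuration, with the halting row repeated or terminated appropriately) is an $(R,C)$-crossword, and the determinism of $M$ together with the pinned top row forces it to be the unique one. Conversely, any $(R,C)$-crossword has, by the $R$-constraint on row one, the correct initial configuration, and then by induction down the rows---using the $C$-constraints to fix each row from the one above---must equal the honest computation tableau; since $R$ demands that the halt state appear, $M$ must halt. For pluralization, I would check that the constructed $(R,C)$ satisfies the plurality condition (or adjust the padding so that any crossword necessarily has at least two rows and two columns), and that positivity holds. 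Finally, the polynomial-time computability of $R$ from $(M,w)$ and of $C$ from $M$ is immediate from the explicit forms, provided I am careful that the unary dimension parameters and the size of $\Sigma$ stay polynomially bounded.

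The main obstacle I anticipate is designing $\Sigma$ and the division of labor between $R$ and $C$ so that a single left-to-right pass and a single top-to-bottom pass of \emph{finite-state} checks suffice to enforce a correct transition, without allowing the head to duplicate, vanish, or teleport, and without needing a genuine two-by-two window. The delicate point is that a transition depends on the head's cell and its neighbor in the direction of motion; getting this information into the right place using only the horizontal ($R$) pass---so that the vertical ($C$) pass sees, in a single column, everything it needs to determine the successor symbol---is exactly where the ``two-directional passing'' trick must be made precise, and I expect the bulk of the proof's care to go into specifying these fields and proving that the honest tableau is the only configuration they jointly admit.
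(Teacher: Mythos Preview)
Your plan is essentially the paper's own approach: encode the computation tableau, use a ``state transmission'' annotation on the cell adjacent to the head so that $R$ can place the outgoing transition data horizontally and $C$ can then verify the successor vertically. The paper's concrete alphabet has exactly the three marker types your sketch anticipates (unscanned $[a]$, scanned $[a,q]$, and a transmission marker $\tmark{a}{r}$ carrying the \emph{next} state), and its $R$ and $C$ split the work just as you describe.

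One point where your outline is too loose is uniqueness. You write that ``the determinism of $M$ together with the pinned top row forces it to be the unique one,'' but this only fixes the \emph{contents} once the dimensions are fixed; it does not by itself pin the width or the height. If $R$ allows trailing blanks in the initial row (as it must, since you cannot know in advance how far right the head will wander), then without a further constraint there are infinitely many widths. The paper handles this on the $C$ side: its column expression requires every column to contain at least one scanned-tape marker, so the crossword's width is exactly the number of cells $M$ ever visits; and it forces any halting marker to be the last symbol of its column, so the height is exactly the number of steps. Your phrase ``with the halting row repeated or terminated appropriately'' and ``suitably padded in width'' is where this needs to be made precise---repetition would destroy uniqueness, and arbitrary padding must be ruled out by $C$, not $R$.
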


\begin{proof}
Let $M = (Q,\Gamma,\delta,q_0,\halt,B)$, where
\begin{itemize}
\item
$Q$ is the (finite) state set,
\item
$q_0\in Q$ is the start state,
\item
$\halt\in Q$ is the halting state, different from $q_0$ ($M$ halts just when this state is entered),
\item
$\Gamma$ is the tape alphabet,
\item
$B\in\Gamma$ is the blank symbol, and
\item
$\map{\delta}{(Q\cmpl\{\halt\})\times\Gamma}{Q\times\Gamma\times\{L,R\}}$ is the transition function.  The left and right head directions are indicated by $L$ and $R$, respectively.
\end{itemize}

Given some input string $w\in(\Gamma\cmpl\{B\})^*$, we construct the two regular expressions $R$ and $C$ over an alphabet $\Sigma$ (defined below).  The expression $C$ only depends on $M$ and not on $w$.
% such that an $(R,C)$-crossword exists if and only if $M$ eventually halts when run with $w$ initially on its tape.  %This implies undecidability of the existence of such a crossword, because $M$ is universal.
For technical convenience and without loss of generality, we will make the following four additional assumptions about $M$'s computation: $M$'s head initially scans the blank cell immediately to the left of $w$; $M$'s initial transition is $\delta(q_0,B) = (q_1,B,L)$ for some state $q_1\ne q_0$; $M$ never re-enters state $q_0$ after its first step, nor scans any tape cells to the left of where it is after the first step (it might write a special symbol in the cell to keep itself from doing this); and at some point, $M$ scans the blank cell immediately to the \emph{right} of the input $w$ (which of course requires it to scan every symbol of $w$).  $M$ can be modified if necessary to meet these conditions without altering its halting \textit{versus} non-halting behavior on any input.

To avoid confusion, we will call the elements of the alphabet $\Sigma$ \emph{markers}, reserving the word \emph{symbol} to refer to elements of $\Gamma$.  The markers in $\Sigma$ are of the following three disjoint types:
\begin{description}
\item[Unscanned tape markers:]
For all $a\in\Gamma$, the marker $[a]$ is in $\Sigma$.  Each of these markers is used to depict a cell of the tape containing the symbol $a$ and which is scanned neither currently nor in the next time step.  We let $U := \{ [a] : a\in\Gamma\}$ denote the set of all unscanned tape markers.
\item[Scanned tape markers:]
For all $a\in\Gamma$ and all $q\in Q$, the marker $[a,q]$ is in $\Sigma$.  Each of these depicts a cell of the tape containing $a$ that is currently being scanned, and $M$'s current state is also included in the marker.
\item[State transmission markers:]
For all $a\in\Gamma$ and all $q\in Q\cmpl\{q_0\}$, the marker $\tmark{a}{q}$ is in $\Sigma$.  These markers depict tape cells that are currently unscanned but will be scanned in the next time step (and so they always appear horizontally adjacent to scanned tape markers for nonhalting states).  $M$'s state in the \emph{next} time step is also included in the marker.
\end{description}
To summarize: At each time step of $M$'s computation, the tape cell scanned by the head is recorded in the crossword by the scanned tape marker, which includes $M$'s current state.  All the unscanned cells of $M$'s tape are recorded in the crossword by their corresponding unscanned tape markers with one exception: the unscanned tape cell that will become scanned in the next time step will be recorded by a state transition marker, which includes $M$'s state in the next time step.

Here are two typical examples.  Suppose $M$'s current state is $q$ and it is scanning a $b$ on the tape, with $a$ to the left and $c$ to the right.  The corresponding configuration is traditionally denoted $\cdots aqbc\cdots$.  If $\delta(q,b) = (r,x,R)$, then the part of the crossword corresponding to the transition $aqbc \mapsto axrc$ looks like this:
\[ \begin{array}{|c|c|c|c|c|} \hline
\cdots & \cdots & \cdots & \cdots & \cdots \\ \hline
\cdots & [a] & [b,q] & \tmark{c}{r} & \cdots \\ \hline
\cdots & [a] & [x] & [c,r] & \cdots \\ \hline
\cdots & \cdots & \cdots & \cdots & \cdots \\ \hline
\end{array} \]
If instead, $\delta(q,b) = (s,y,L)$, then we get this for the transition $aqbc \mapsto sayc$:
\[ \begin{array}{|c|c|c|c|c|} \hline
\cdots & \cdots & \cdots & \cdots & \cdots \\ \hline
\cdots & \tmark{a}{s} & [b,q] & [c] & \cdots \\ \hline
\cdots & [a,s] & [y] & [c] & \cdots \\ \hline
\cdots & \cdots & \cdots & \cdots & \cdots \\ \hline
\end{array} \]

The one exception to this rule is a halting configuration, say $\cdots a\halt bc\cdots$, which is represented in the crossword thus:
\[ \begin{array}{|c|c|c|c|c|} \hline
\cdots & \cdots & \cdots & \cdots & \cdots \\ \hline
\cdots & [a] & [b,\halt] & [c] & \cdots \\ \hline
\end{array} \]
We will guarantee that there can be no rows of the crossword below this one.

\subsubsection*{The regular expression $R$}

$R$ ensures that all the rows of the crossword look like they should.  First we define a regular expression giving the initial configuration of $M$ on input $w$:  Let $w = w_1w_2\cdots w_n$, where $n\ge 0$ and each $w_i$ is in $\Gamma\cmpl\{B\}$.  Define
\begin{equation}\label{eqn:initial-tape}
I_w := \tmark{B}{q_1}[B,q_0][w_1][w_2]\cdots[w_n][B]^+\;.
\end{equation}
This is the only component of our construction that depends on the string $w$.  Since in its first step $M$'s head moves left and its state becomes $q_1$, this is the correct description of the first row.  Since $M$ never scans any cells further to the left thereafter, we can take $\tmark{B}{q_1}$ to start $I_w$.  Next, we define strings of markers indicating configurations beyond the initial one.  Set
\begin{align*}
T_L &:= \{ \tmark{b}{r}[a,q] : a,b\in\Gamma \myand q\in Q\cmpl\{q_0,\halt\} \myand (\exists c\in\Gamma)\delta(q,a) = (r,c,L) \}\;, \\
T_R &:= \{ [a,q]\tmark{b}{r} : a,b\in\Gamma \myand q\in Q\cmpl\{q_0,\halt\} \myand (\exists c\in\Gamma)\delta(q,a) = (r,c,R) \}\;, \\
T &:= T_L \cup T_R \cup \{[a,\halt] : a\in\Gamma \}\;,
\end{align*}
describing portions of the tape undergoing transitions.  Then we define
\[ R := I_w\cup U^*TU^*\;. \]
Note that $R$ requires each row to include exactly one scanned tape marker.  If the corresponding state is nonhalting, then it is adjacent to some state transmission marker (and this is the only place the latter marker can appear in the row).  If the corresponding state is halting, then there is no state transmission marker on the row.

Clearly, $R$ is positive and computable in polynomial time given $w$ and a description of $M$.

\subsubsection*{The regular expression $C$}

$C$ ensures that all the columns of the crossword look like they should.  We define $C := S\cap W$ as the intersection of two subexpressions: $S$ ensures that each tape cell stays constant (``static'')---except just after it is scanned by $M$'s head---and that when a cell becomes scanned, the new state information is faithfully copied from the previous time step; $W$ ensures that the correct symbol is written into a scanned cell on the next time step.

For $S$ we define
\begin{align*}
D &:= \bigcup_{a\in\Gamma,\,q\in Q\cmpl\{q_0\}}[a]^*\tmark{a}{q}[a,q]\;, \\
E &:= \bigcup_{a\in\Gamma,\,q\in Q\cmpl\{q_0\}}[a]^+\tmark{a}{q}[a,q]\;, \\
F &:= \bigcup_{a\in\Gamma}[a]^*\;, \\
S &:= (E \cup [B,q_0] \cup \tmark{B}{q_1}[B,q_1])D^*F\;.
\end{align*}
A string matching $E \cup [B,q_0] \cup \tmark{B}{q_1}[B,q_1]$ gives the contents of a tape cell starting at the beginning up through the first time it is scanned.  Thereafter, each string matching $D$ represents a time interval ending with the cell being scanned again.  $F$ is matched by the cell contents after the last time it is scanned.  Note that $S$ is positive, and hence $C$ is positive.

For $W$ we define
\begin{align*}
X &:= \{ [a,q][b] : a\in\Gamma \myand q\in Q\cmpl\{\halt\} \myand (\exists r\in Q)(\exists d\in\{L,R\})[\delta(a,q) = (r,b,d)] \} \\
Y &:= \{ [a,q]\tmark{b}{s} : a\in\Gamma \myand q\in Q\cmpl\{\halt\} \myand s\in Q\cmpl\{q_0\} \myand (\exists r\in Q)(\exists d\in\{L,R\})[\delta(a,q) = (r,b,d)] \} \\
H &:= \{ [a,\halt] : a\in\Gamma \} \\
Z &:= \Sigma \cmpl \{ [a,q] : a\in\Gamma \myand q\in Q \} \\
W &:= Z^*(XZ^*\cup Y)^*H?\;.
\end{align*}
Note that $W$ matches all strings in which any occurrence of a non-halting scanned tape marker is immediately followed by either an unscanned tape marker (or state transmission marker) giving the cell's correct contents after the corresponding transition of $M$.  $W$ also allows an optional halting scanned tape marker at the very end of the string.

Notice that $C$ is computable from $M$ alone and does not depend on the input string $w$ at all.  Note that we are \emph{not} asserting that $C$ is computable in polynomial time.  Our description of $C$ includes the intersection operator $\cap$, which is not part of the formal syntax of regular expressions.  As we mentioned, one can effectively compute an equivalent regular expression without the $\cap$ operator, but the resulting regular expression may be exponentially larger.

\subsubsection*{Correctness}

One direction of the lemma is now fairly clear: If $M$ halts starting with $w$ on its tape, then an $(R,C)$-crossword exists.  Such a crossword is also unique: $S$ makes sure that every column contains at least one scanned tape marker, and so the crossword represents exactly those tape cells that are scanned at least once by $M$ (which, by assumption, include the entire input string $w$); furthermore, any row containing a marker of the form $[a,\halt]$ (for some $a\in\Gamma$) must be the last row---this is enforced by $W$.  Finally, we note that, because $M$ makes at least one transition before it halts, the corresponding $(R,C)$-crossword has at least two rows and two columns, which makes $(R,C)$ plural.

For the other direction, suppose $X$ is an $(R,C)$-crossword.  Let $r_1,\ldots,r_m\in\Sigma^*$ and $c_1,\ldots,c_n\in\Sigma^*$ be the rows and columns of $X$, respectively, for some $m,n\ge 1$.  $S$ ensures that $r_1$ matches $(U\cup [B,q_0]\cup \tmark{B}{q_1})^*$, and since $R$ forces $r_1$ to contain a scanned tape marker somewhere, that marker must be $[B,q_0]$.  It follows that $r_1$ does not match $U^*TU^*$, and so it matches $I_w$, providing the right starting configuration for $M$ (and ensuring that $n\ge 2$).  We also have $m\ge 2$, ensured by $S$ because $r_1$ contains $\tmark{B}{q_1}$.  Thus $(R,C)$ is plural.  Subsequent rows must then conform to $M$'s computation, as was described previously.

Finally, the last row $r_m$ must contain a marker of the form $[a,\halt]$ for some $a\in\Gamma$, indicating that $M$ halts.  This is because $R$ ensures that $r_m$ contains some scanned tape marker, and supposing this marker is of the form $[a,q]$ for some $q\ne\halt$, there must be a state transmission marker on either side of it in $r_m$, whence $S$ ensures that this latter marker is followed by a scanned tape marker in its column, which means $r_m$ could not have been the last row.
\end{proof}

Lemma~\ref{lem:HP-to-RC} yields the following result:

\begin{theorem}\label{thm:HP-fixed-C}
There exists an alphabet $\Sigma$ and a positive regular expression $C$ over $\Sigma$ such that the decision problem
\begin{quote}
Given a regular expression $R$ over $\Sigma$ such that $(R,C)$ is plural, does an $(R,C)$-crossword exist?
\end{quote}
is m-equivalent to the Halting Problem (and is thus undecidable).
\end{theorem}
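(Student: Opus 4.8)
The plan is to apply Lemma~\ref{lem:HP-to-RC} to a single fixed universal Turing machine, so that the alphabet $\Sigma$ and the expression $C$ become independent of the input $R$. First I would fix a universal Turing machine $U$ (after the harmless modifications described in the lemma, so that $U$ has the required restricted form) whose halting problem $\{u : U \text{ halts on } u\}$ is m-equivalent to the ordinary Halting Problem; such a $U$ exists by standard constructions. Applying the lemma with $M := U$ then yields a fixed alphabet $\Sigma := \Sigma(U)$ and a fixed positive regular expression $C := C(U)$, together with, for every input string $u$, a regular expression $R_u := R(U,u)$ computable from $u$ in polynomial time, such that $(R_u,C)$ is plural and $U$ halts on $u$ if and only if an $(R_u,C)$-crossword exists.

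For hardness I would exhibit the map $u \mapsto R_u$ as an m-reduction from the halting problem of $U$ to the crossword existence problem with this fixed $C$. The map is computable, its output $R_u$ always satisfies the promise that $(R_u,C)$ is plural, and by the lemma it sends halting inputs exactly to expressions admitting a crossword and non-halting inputs to expressions admitting none. Since $U$'s halting problem is m-equivalent to the Halting Problem, this shows the crossword problem is at least as hard as the Halting Problem, and in particular undecidable.

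For the reverse reduction I would observe that the crossword existence problem is computably enumerable: on input $R$ with $(R,C)$ plural, a semi-decision procedure enumerates all finite two-dimensional arrays over the fixed alphabet $\Sigma$ in some effective order and accepts as soon as it finds one all of whose rows match $R$ and all of whose columns match $C$ (membership in $L(R)$ and $L(C)$ being decidable). Because the Halting Problem is c.e.-complete, every c.e.\ set m-reduces to it, so the crossword problem m-reduces back to the Halting Problem. Combining the two reductions gives the claimed m-equivalence.

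The conceptual heart of the argument---and the step I expect to matter most---is the first one: recognizing that Lemma~\ref{lem:HP-to-RC} already allows $C$ to depend on $M$ but not on $w$, so instantiating $M$ once and for all with a universal machine collapses that dependence and fixes both $\Sigma$ and $C$. After that observation the two m-reductions are essentially routine. One minor point to verify is that the promise restriction to plural pairs causes no trouble: the forward reduction only ever produces plural instances, and plurality is decidable in polynomial time, so the restricted problem is a legitimate decision problem of the same difficulty.
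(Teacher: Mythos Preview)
Your proposal is correct and matches the paper's proof essentially line for line: fix a universal machine, invoke Lemma~\ref{lem:HP-to-RC} to obtain the fixed $\Sigma$ and $C$ and the computable map $u\mapsto R_u$ for the hardness direction, and observe that the problem is c.e.\ for the reverse direction. Your closing remarks about plurality are a harmless elaboration of what the paper leaves implicit.
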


\begin{proof}
We apply Lemma~\ref{lem:HP-to-RC} letting $M$ be a universal Turing machine (or any Turing machine recognizing the Halting Problem).  Let $\Sigma$ and $C$ be as constructed in the proof.  Letting $W$ be the decision problem above, we get a computable function $g$ such that, for any string $w$, $g(w)$ is a regular expression $R$ such that $(R,C)$ is plural, and for all $w$, \ $M$ halts on $w$ if and only if an $(R,C)$-crosswords exists.  Thus $g$ m-reduces the Halting problem to $W$.  Conversely, $W$ is clearly c.e., and thus m-reduces to the Halting Problem.
\end{proof}

\begin{corollary}[Giammarresi, Restivo \cite{GR:2d-languages}]
Given regular expressions $R$ and $C$, it is undecidable (m-equivalent to the Halting Problem) whether an $(R,C)$-crossword exists.
\end{corollary}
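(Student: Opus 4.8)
The plan is to derive this directly from Theorem~\ref{thm:HP-fixed-C} with almost no additional work, since the general problem (where both $R$ and $C$ are part of the input) can only be harder than the restricted problem (where $C$ is fixed). I would establish m-equivalence to the Halting Problem by proving the two required directions separately: hardness by reducing \emph{from} the problem of Theorem~\ref{thm:HP-fixed-C}, and membership in the upper class by a brute-force semidecision procedure.

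For hardness, I would m-reduce the decision problem of Theorem~\ref{thm:HP-fixed-C} to the general existence problem. Let $C$ be the fixed positive regular expression over $\Sigma$ furnished by that theorem. The reduction sends an input $R$ (a regular expression over $\Sigma$ with $(R,C)$ plural) to the pair $(R,C)$. This map is clearly computable, because $C$ is a single fixed object independent of the input, so producing it costs nothing. An $(R,C)$-crossword exists when viewed as an instance of the general problem precisely when it exists as an instance of the restricted problem, so the yes/no answer is preserved. Since the restricted problem is m-equivalent to the Halting Problem, the general problem is hard for the Halting Problem under m-reductions.

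For the upper bound, I would observe that the general problem is c.e. Given $(R,C)$ over an alphabet $\Sigma$, one can systematically enumerate all nonempty $m\times n$ arrays over $\Sigma$ (say by increasing $m+n$, breaking ties in any fixed way) and test each array to see whether every row matches $R$ and every column matches $C$; checking whether a single fixed string lies in the language of a regular expression is decidable. This procedure halts and accepts exactly when some $(R,C)$-crossword exists, so the general problem is c.e.\ and therefore m-reduces to the Halting Problem. Combining the two reductions yields m-equivalence to the Halting Problem, as claimed.

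I do not expect any genuine obstacle here; the result is essentially immediate once Theorem~\ref{thm:HP-fixed-C} is in hand. The only points requiring a moment's care are confirming that fixing $C$ keeps the hardness reduction computable, and noting that the plurality side condition attached to the restricted problem does not interfere: it merely constrains the range of the hardness reduction, and since the general problem is defined on \emph{all} pairs $(R,C)$, every output of the reduction is automatically a legitimate instance.
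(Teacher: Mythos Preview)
Your proposal is correct and takes essentially the same approach as the paper: hardness follows by specializing to the fixed $C$ of Theorem~\ref{thm:HP-fixed-C}, and the upper bound follows because the general problem is c.e.\ (the paper phrases this as ``$W$ is c.e.\ uniformly in $C$''). Your version simply spells out in full the details that the paper compresses into a single sentence.
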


\begin{proof}
Just note that $W$ in the proof of Theorem~\ref{thm:HP-fixed-C} is c.e.\ uniformly in $C$.
\end{proof}

%\begin{remark}
%The language of all $(R,C)$ such that an $(R,C)$-crossword exists is clearly c.e.  The proof above shows then that this language is m-equivalent to the Halting Problem.
%\end{remark}

%The proof of Theorem~\ref{thm:RC-is-undecidable} actually gives a stronger result:
%
%\begin{theorem}
%There exists an alphabet $\Sigma$ and a regular expression $C$ over $\Sigma$ such that the language of all regular expressions $R$ over $\Sigma$ such that an $(R,C)$-crossword exists is m-equivalent to the Halting Problem.
%\end{theorem}
%
%\begin{proof}
%Let $\Sigma$ and $C$ be as in the proof of Theorem~\ref{thm:RC-is-undecidable}.  Since $C$ depends on $U$ only and not on the input string $w$, we can treat $C$ as a fixed parameter, making the mapping $w\mapsto R$ defined in the same proof an m-reduction from the Halting Problem (with respect to $U$).
%\end{proof}

\subsection{Making the row and column expressions equal}

The $(R,C)$-crossword existence problem remains undecidable even if we insist that $R = C$.  We get this from the following lemma:

\begin{lemma}\label{lem:HP-to-EE}
There exists a polynomial-time computable function $b$ such that, for any alphabet $\Sigma$ and any regular expressions $R$ and $C$ over $\Sigma$ such that $(R,C)$ is plural, $E := b(\Sigma,R,C)$ is a positive regular expression (over a slightly bigger alphabet $\Sigma'$) such that an $(E,E)$-crossword exists if and only if an $(R,C)$-crossword exists.  Furthermore, there is a one-to-one map $\rho$ mapping $\Sigma$-crosswords of size $m\times n$ (where $m,n\ge 2$) to $\Sigma'$-crosswords of size $(m+1)\times (n+1)$ that takes $(R,C)$-crosswords to $(E,E)$-crosswords, and for every $(E,E)$-crossword $Y$, there exists an $(R,C)$-crossword $X$ such that $\rho(X)$ is either $Y$ or the matrix transpose of $Y$.
\end{lemma}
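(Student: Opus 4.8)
The plan is to realize an $(R,C)$-crossword as the ``interior'' of a slightly larger $(E,E)$-crossword, adding a one-cell-thick border (an extra top row and left column) that records which direction plays the role of rows and which plays the role of columns. I introduce three fresh markers $\gamma,\alpha,\beta\notin\Sigma$ (a corner marker and two arm markers) together with a tagged copy $\Sigma\times\{s,m,e\}^2$ of the interior alphabet, and set $\Sigma'\eqdf(\Sigma\times\{s,m,e\}^2)\cup\{\gamma,\alpha,\beta\}$. From $R$ and $C$ I first build tagged expressions $R'$ and $C'$: a string matches $R'$ iff its underlying $\Sigma$-projection is in $L(R)$ and the sequence of its \emph{first} tag-coordinates spells a string of $s\,m^*\,e$; symmetrically, it matches $C'$ iff its $\Sigma$-projection is in $L(C)$ and its \emph{second} tag-coordinates spell a string of $s\,m^*\,e$. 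Then I set
\[ E \eqdf \gamma\alpha^+ \cup \gamma\beta^+ \cup \alpha C' \cup \beta R'. \]
The map $\rho$ frames an $m\times n$ crossword $X$ (for $m,n\ge2$) by placing $\gamma$ in the top-left corner, $\alpha^n$ along the new top row, $\beta^m$ down the new left column, and tagging each interior cell positionally (first tag $s/m/e$ according as its column is leftmost/interior/rightmost, second tag likewise for its row). This $\rho$ is clearly injective, and a direct check shows that when $X$ is an $(R,C)$-crossword every full row of $\rho(X)$ matches $\gamma\alpha^+$ or $\beta R'$ and every full column matches $\gamma\beta^+$ or $\alpha C'$, so $\rho(X)$ is an $(E,E)$-crossword; this gives the forward direction of the existence equivalence.

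For the converse I would take an arbitrary $(E,E)$-crossword $Y$ of size $p\times q$. Since every string of $L(E)$ has length at least two, $p,q\ge2$. The marker $\gamma$ occurs in $L(E)$ only as the first symbol of $\gamma\alpha^+$ or $\gamma\beta^+$, so any cell equal to $\gamma$ must begin both its row and its column and is therefore forced to the top-left corner; moreover $Y$ must contain a $\gamma$, for otherwise every line would match $\alpha C'$ or $\beta R'$, so each entry of column $1$ (being the first symbol of its row) would be $\alpha$ or $\beta$, yet column $1$ itself, matching $\alpha C'$ or $\beta R'$, would need its entries after the first to lie in $\Sigma\times\{s,m,e\}^2$, which is impossible. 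Hence $y_{1,1}=\gamma$, row $1$ matches $\gamma\alpha^+$ or $\gamma\beta^+$, column $1$ matches $\gamma\alpha^+$ or $\gamma\beta^+$, and each arm is uniform. In the two ``mixed'' combinations (row arm $\alpha$ with column arm $\beta$, or vice versa) the interior $(p-1)\times(q-1)$ block has its rows matching $R'$ and its columns matching $C'$ (respectively the reverse); since $R'$ forces the first tags to depend only on the column and $C'$ forces the second tags to depend only on the row, the interior tagging is exactly the positional one, so erasing the tags yields an $(R,C)$-crossword $X$ with $\rho(X)=Y$ (respectively $\rho(X)=\tp{Y}$, obtained by running the same analysis on the transpose, which lands in the first mixed case).

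The crux is ruling out the two ``pure'' combinations, in which the interior would be a $(C',C')$- or an $(R',R')$-crossword; this is precisely the residual symmetry a naive border cannot break, and is the whole reason for the position tags. I claim no $(C',C')$-crossword exists: in such a block every row and every column matches $C'$, so its top-right cell would simultaneously be the last symbol of a $C'$-matching row, forcing its second tag to be $e$, and the first symbol of a $C'$-matching column, forcing its second tag to be $s$ -- a contradiction. The identical argument with first tags rules out $(R',R')$-crosswords. By contrast the mixed interior survives because there the analogous corner cell is the last symbol of an $R'$-row (a constraint on its \emph{first} tag) and the first symbol of a $C'$-column (a constraint on its \emph{second} tag), and these two constraints never collide. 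I expect this self-incompatible-yet-cross-compatible behavior of $R'$ and $C'$ to be the main obstacle: it is what the two independent tag-coordinates are engineered to provide, and once they are in place plurality of $(R,C)$ transfers to $(R',C')$ (same sizes), so $E$ is positive and all four cases are accounted for.

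Combining the two directions gives that an $(E,E)$-crossword exists iff an $(R,C)$-crossword exists, and that every $(E,E)$-crossword $Y$ equals $\rho(X)$ or $\tp{(\rho(X))}$ for a unique $(R,C)$-crossword $X$. It remains only to check that $b$ is polynomial-time and outputs an ordinary (intersection-free) regular expression. Although $R'$ and $C'$ are most naturally described using $\cap$, each is obtained from $R$ (resp.\ $C$) by the standard first-symbol and last-symbol marking transformations applied after substituting the all-$m$ tagged copies of the symbols; these transformations incur only polynomial blow-up, so $E$ can be produced directly as an intersection-free regular expression in polynomial time, yielding the desired function $b$.
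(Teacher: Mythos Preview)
Your proof is correct and takes a genuinely different route from the paper's. The paper adds only three fresh symbols $\spadesuit,\heartsuit,\diamondsuit$, places the border on the \emph{first column} and \emph{last row} (with the corner $\diamondsuit$ at the bottom-left), and sets $E=\heartsuit R\cup\diamondsuit\spadesuit\spadesuit\spadesuit^*\cup C\spadesuit\cup\heartsuit\heartsuit\heartsuit^*\diamondsuit$; the symmetry is broken by the asymmetric placement of $\diamondsuit$ at opposite ends of the two arm-patterns, and the converse direction is handled by a somewhat delicate case analysis on which of $R'$ or $C'$ is matched by the second row, using plurality to first force $m,n\ge 3$.

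Your mechanism is different in kind: you inflate the interior alphabet to $\Sigma\times\{s,m,e\}^2$ so that $R'$ and $C'$ each become \emph{self-incompatible} (no $(R',R')$- or $(C',C')$-crossword can exist, by the clean top-right-corner tag clash) while remaining cross-compatible. This buys you a one-line elimination of the two ``pure'' arm combinations and makes plurality almost incidental---your $sm^*e$ pattern already forces length $\ge 2$, so plurality is used only to ensure that every $(R,C)$-crossword lies in the domain of $\rho$. The price is a $9$-fold larger interior alphabet, whereas the paper adds just three symbols; this matters a little downstream when one re-encodes over $\{0,1\}$, but not for the lemma as stated.

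On the polynomial-time point: you are right that the naive description of $R'$ as an intersection could blow up, and that the first-/last-symbol marking transformations avoid this with only polynomial overhead; it might be worth noting explicitly that length-$1$ strings of $L(R)$ simply drop out of $L(R')$ under this construction (since a single position cannot carry both the $s$ and $e$ first-tag), which is harmless here because plurality guarantees no $(R,C)$-crossword has a length-$1$ row or column.
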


\begin{proof}
Let $\Sigma$, $R$, and $C$ be given as in the lemma.  We want to effectively find an $E$ so that a unique $(E,E)$-crossword corresponds to any given $(R,C)$-crossword and vice versa.  A first attempt at constructing $E$ would be to set $E := R \cup C$.  This may not work, because an $(R,C)$-crossword may not exist, but there is an $(E,E)$-crossword where each row and column might match $R$, but the columns do not match $C$, say.  (In the case of $R$ and $C$ in the proof of Lemma~\ref{lem:HP-to-RC}, any square array with a single $[B,\halt]$ in each row and column, and the rest filled with all $[B]$'s is an $(R\cup C,R\cup C)$-crossword, regardless of $w$.)  There are perhaps several ways to correct this problem, and here is a fairly simple fix:
\begin{enumerate}
\item
Introduce three new symbols not in $\Sigma$: $\spadesuit$ (the ``bottom edge marker''); $\heartsuit$ (the ``left edge marker''); and $\diamondsuit$ (the ``corner marker'').
\item
Then modify $R$ and $C$ slightly to $R'$ and $C'$, respectively, so that any $(R'\cup C',R'\cup C')$-crossword or its matrix transpose has its first column matching $\heartsuit^*\diamondsuit$, its last row matching $\diamondsuit\spadesuit^*$, and the rest of the array being an $(R,C)$-crossword as before:
\[ \begin{array}{|c|ccc|} \hline
\heartsuit & & & \\
\vdots & & \mbox{$(R,C)$-crossword} & \\
\heartsuit & & & \\ \hline
\diamondsuit & \spadesuit & \cdots & \spadesuit \\ \hline
\end{array} \]
\end{enumerate}
Informally, the $\heartsuit$ and $\spadesuit$ markers prevent rows from being confused with columns, and the $\diamondsuit$ marker prevents $\heartsuit$ and $\spadesuit$ from being confused with each other.  Here are the formal definitions:
\begin{align*}
\Sigma' &:= \Sigma \cup \{\spadesuit,\heartsuit,\diamondsuit\}\;, \\
R' &:= \heartsuit R \cup \diamondsuit\spadesuit\spadesuit\spadesuit^*\;, \\
C' &:= C\spadesuit \cup \heartsuit\heartsuit\heartsuit^*\diamondsuit\;, \\
E &:= R' \cup C'\;.
\end{align*}
Clearly, $E = b(\Sigma,R,C)$ is positive and computable in polynomial time.  To see that this construction works, first observe that an $m\times n$ $(R,C)$-crossword $X$ (with $m,n\ge 2$ because $(R,C)$ is plural) becomes an $(m+1)\times(n+1)$ $(E,E)$-crossword $\rho(X)$ by prepending the column $\heartsuit^m$ then appending the row $\diamondsuit\spadesuit^n$.  This defines the map $\rho$, which is clearly one-to-one and maps $(R,C)$-crosswords to $(E,E)$-crosswords with one more row and column.  It follows that an $(E,E)$-crossword exists if an $(R,C)$-crossword exists.

Conversely, let $Y$ be any $(E,E)$-crossword---say, $m\times n$---with rows $r_1,\ldots,r_m$ and columns $c_1,\ldots,c_n$, all matching $E$.  We show first that $m,n\ge 3$.  Suppose not.  We must have $m,n\ge 2$, because both $R$ and $C$ are positive.  We may assume that $m=2$; otherwise, we apply the same argument to the transpose of $Y$, which is still an $(E,E)$-crossword.  Then each column of $Y$ has length $2$ and thus must match either $\heartsuit R$ or $C\spadesuit$.  Suppose $c_2$ starts with $\heartsuit$.  Then since $r_1$ has $\heartsuit$ as its second symbol, it must match $\heartsuit\heartsuit\heartsuit^*\diamondsuit$, whence $c_n$ starts with $\diamondsuit$; but then $|c_n|\ge 3$, contradicting our assumption that $m=2$.  Now suppose instead that $c_2$ matches $C\spadesuit$.  Then either $r_2$ matches $\diamondsuit\spadesuit\spadesuit\spadesuit^*$ or $r_2 = a\spadesuit$ for some $a\in\Sigma$ matching $C$.  The former case would make $c_1$ have $\diamondsuit$ as its second symbol, which is impossible.  In the latter case, we must have $c_1 = \heartsuit a$, which matches $\heartsuit R$.  But then $a$ matches both $R$ and $C$, making a $1\times 1$ $(R,C)$-crossword, which contradicts the fact that $(R,C)$ is plural.

Having established that $m,n\ge 3$, we next show that removing the first column and last row from either $Y$ or its transpose results in an $(R,C)$-crossword $X$, from which it will be clear that $\rho(X)$ is either $Y$ or its transpose, respectively.

Consider $r_2$, which has length $\ge 3$ and matches either $R'$ or $C'$.
\begin{description}
\item[Case 1:] $r_2$ matches $R'$.  Then $r_2$ must begin with $\heartsuit$: otherwise, it begins with $\diamondsuit$, but then $c_1$ has $\diamondsuit$ as its second symbol, which is impossible.  Then we have $r_2 = \heartsuit r$ for some string $r$ matching $R$, and since $c_1$ has $\heartsuit$ as its second symbol, we have $c_1 = \heartsuit^{m-1}\diamondsuit$, whence it follows that $r_m = \diamondsuit\spadesuit^{n-1}$.  Now consider the columns $c_2,\ldots,c_n$.  These all end with $\spadesuit$, and so they must all match $C\spadesuit$, because they all contain symbols in $\Sigma$ (from $r_2$).  So now we know that all symbols in $Y$ other than the first column and last row are in $\Sigma$, that is, for each $1\le i \le m-1$, all symbols in $r_i$, except possibly the first, are in $\Sigma$.  The only way this can happen is if each $r_i$ matches $\heartsuit R$.  This establishes that $Y$ minus the first column and last row is an $(R,C)$-crossword (whose image under $\rho$ is $Y$).
\item[Case 2:] $r_2$ matches $C'$.  By transposing $Y$, we can assume instead that $c_2$ matches $C'$, which is conceptually simpler.  The argument here is similar to Case~1.  The string $c_2$ cannot end with $\diamondsuit$, as that would also be the second symbol of $r_m$, which is impossible.  So we have that $c_2 = c\spadesuit$ for some string $c$ matching $C$, and thus $r_m = \diamondsuit\spadesuit^{n-1}$ (because $|r_m|\ge 3$), whence it follows that $c_1 = \heartsuit^{m-1}\diamondsuit$.  Now since $r_1,\ldots,r_{m-1}$ all start with $\heartsuit$ and contain at least one symbol from $\Sigma$, they all match $\heartsuit R$.  So again, all symbols in $Y$ except the first column and last row are from $\Sigma$, and since $c_2,\ldots,c_n$ all end in $\spadesuit$, they much all match $C\spadesuit$.  So again we have that deleting the first column and last row results in an $(R,C)$-crossword.
\end{description}
We have shown that removing the first column and last row from either $Y$ (in Case~1) or its transpose (in Case~2) results in an $(R,C)$-crossword $X$ such that $\rho(X)$ is either $Y$ or its transpose, respectively.  In particular, if an $(E,E)$-crossword exists, then an $(R,C)$-crossword exists.
\end{proof}

\begin{theorem}\label{thm:EE-is-undecidable}
Given a positive regular expression $E$, it is undecidable (in fact, m-equivalent to the Halting Problem) whether a $(E,E)$-crossword exists.
\end{theorem}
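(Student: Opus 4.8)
The plan is to compose the two reductions already established in this section. Theorem~\ref{thm:HP-fixed-C} provides a fixed alphabet $\Sigma$, a fixed positive regular expression $C$, and a computable function $g$ such that, writing $R := g(w)$, the pair $(R,C)$ is always plural and the machine $M$ (recognizing the Halting Problem) halts on $w$ if and only if an $(R,C)$-crossword exists. Lemma~\ref{lem:HP-to-EE} supplies a computable function $b$ that, on any plural pair $(R,C)$ over $\Sigma$, produces a positive regular expression $E := b(\Sigma,R,C)$ over a slightly larger alphabet such that an $(E,E)$-crossword exists if and only if an $(R,C)$-crossword exists. Since the pairs $(R,C)$ coming out of Theorem~\ref{thm:HP-fixed-C} are guaranteed plural, the precondition of Lemma~\ref{lem:HP-to-EE} is met, and the two maps chain together cleanly.

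First I would define $h(w) := b(\Sigma, g(w), C)$. This $h$ is computable as a composition of computable functions, and for every $w$ the output $E := h(w)$ is a positive regular expression with the property that an $(E,E)$-crossword exists exactly when an $(R,C)$-crossword exists, which in turn happens exactly when $M$ halts on $w$. Hence $h$ is an m-reduction from the Halting Problem to the $(E,E)$-crossword existence problem, establishing that the latter is Halting-hard (and $E$ is positive, as the statement demands).

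One technical point to address is that the expression $C$ from Lemma~\ref{lem:HP-to-RC} is written using the intersection operator $\cap$, which is syntactic sugar rather than part of the formal syntax that $b$ operates on. Before feeding $C$ to $b$ I would replace it by an equivalent $\cap$-free regular expression, using the standard effective elimination of $\cap$ noted in the preliminaries. Because $C$ is fixed independently of $w$, this conversion is performed once and for all; it may cause an exponential blowup, but that affects only efficiency, not computability, and m-equivalence to the Halting Problem requires only computability. For the reverse direction I would observe that the $(E,E)$-crossword existence problem is computably enumerable: given $E$, one dovetails over all dimensions $m,n\ge 1$ and all $m\times n$ arrays of letters from the finite alphabet of $E$, accepting as soon as one is found whose every row and column matches $E$; this search halts precisely when a crossword exists. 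Any c.e.\ problem m-reduces to the Halting Problem, so the problem is Halting-easy as well, and combining the two directions yields m-equivalence to the Halting Problem, hence undecidability.

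I expect the only real subtlety to be the bookkeeping around the $\cap$ operator and the explicit verification that the plurality hypothesis of Lemma~\ref{lem:HP-to-EE} holds for every $g(w)$; the heart of the argument is merely the chaining of two reductions that were constructed precisely so as to be chained, so I anticipate no substantive obstacle beyond this routine checking.
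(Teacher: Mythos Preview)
Your proposal is correct and follows essentially the same approach as the paper: compose the map $g$ from Theorem~\ref{thm:HP-fixed-C} with the map $b$ from Lemma~\ref{lem:HP-to-EE} (using that $(g(w),C)$ is plural) to m-reduce the Halting Problem to $(E,E)$-crossword existence, and observe c.e.-ness for the converse. Your extra care with the $\cap$ operator in $C$ is a welcome clarification of a point the paper leaves implicit.
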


\begin{proof}
The problem is clearly c.e.\ and hence m-reduces to the Halting Problem.  Conversely, let $b$ be the function of Lemma~\ref{lem:HP-to-EE}, and let $M$, $\Sigma$, $C$, and $g$ be as in the proof of Theorem~\ref{thm:HP-fixed-C}.  Then $(g(w),C)$ is plural by Lemma~\ref{lem:HP-to-RC}.  For any string $w$, \ $M$ halts on $w$ if and only if a $(g(w),C)$-crossword exists.  Then, letting $E := b(\Sigma,g(w),C)$ (computable from $w$), we get by Lemma~\ref{lem:HP-to-EE} that $E$ is positive and that $M$ halts on $w$ if and only if an $(E,E)$-crossword exists.  Thus the mapping $b(\Sigma,g(\cdot),C)$ m-reduces the Halting problem to the $(E,E)$-crossword existence problem.
\end{proof}

\subsection{A decidable crossword existence problem}

In contrast with the previous results, we have the following:

\begin{theorem}\label{thm:decidable-RC}
There is an algorithm that decides, given a list of regular expressions $\tuple{R_1,\ldots,R_m}$ and a regular expression $C$ over an arbitrary alphabet $\Sigma$, whether there exists an $n\ge 1$ and an $m\times n$ array all of whose columns match $C$ and whose $i$th row matches $R_i$ for all $1\le i\le m$.  In fact, this decision problem is in PSPACE.
\end{theorem}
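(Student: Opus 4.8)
The plan is to treat each column of the array as a single \emph{super-symbol} and to reduce the problem to a (nonempty) reachability question in a huge product automaton, which I will then solve in polynomial space. The key structural point is that the number of rows $m$ is fixed by the input, so every column is a string of length exactly $m$, and the admissible columns form the finite set $V := L(C)\cap\Sigma^m$. Reading the array column by column from left to right, a choice of columns $v_1,\ldots,v_n\in V$ determines, for each row $i$, the string made from the $i$th components of the columns; the array is a solution exactly when this string matches $R_i$ for every $i$. Thus a solution exists iff there is a nonempty sequence of columns in $V$ for which, for each $i$, the sequence of $i$th components lies in $L(R_i)$.

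Next I would set up the automata. Construct an NFA $A_i$ for each $R_i$ and an NFA $B$ for $C$; all have size polynomial in the input. Form the product automaton $\mathcal{P}$ whose alphabet is $V$ and whose states are tuples $(s_1,\ldots,s_m)$ of states, one from each $A_i$; on reading a column $v = v_1\cdots v_m\in V$, the machine advances each $A_i$ on the single symbol $v_i$. A nonempty path in $\mathcal{P}$ from the start tuple $(q_0^{(1)},\ldots,q_0^{(m)})$ to a tuple all of whose components are accepting corresponds exactly to a solution array, so the whole question reduces to nonempty reachability of an accepting tuple in $\mathcal{P}$.

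The obstacle, and the reason we land in PSPACE rather than something smaller, is that both the alphabet $V$ and the state set of $\mathcal{P}$ can be exponentially large (in $m$ and in the sizes of the $A_i$), so neither can be written down explicitly. I would handle this with an implicit, Savitch-style search. Each product state is a tuple of $m$ automaton states and so is stored in polynomial space. The only delicate ingredient is deciding, in polynomial space, whether $\mathcal{P}$ has an edge from $(s_1,\ldots,s_m)$ to $(t_1,\ldots,t_m)$: this edge exists iff there is a column $v\in V$ with $s_i\xrightarrow{v_i}t_i$ in $A_i$ for every $i$. Letting $P_i := \{a\in\Sigma : s_i\xrightarrow{a}t_i\text{ in }A_i\}$, the edge exists iff $L(C)$ contains some string $v_1\cdots v_m$ with $v_i\in P_i$ for all $i$, which I can test by a layered reachability computation in $B$ over $m$ steps, in polynomial time.

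With polynomial-size states and a polynomial-time edge test in hand, reachability in the exponentially large graph $\mathcal{P}$ is decidable in nondeterministic polynomial space—guess the path one tuple at a time, maintaining only the current tuple and a polynomial-size step counter bounded by the (at most exponential) number of product states—hence in PSPACE by Savitch's theorem. To enforce $n\ge 1$ I would require the witnessing path to use at least one edge, for instance by checking that some accepting tuple is reachable from a successor of the start tuple. The main work is thus organizing the two nested space-bounded searches, the outer reachability in $\mathcal{P}$ and the inner edge test through $B$, so that the total space used stays polynomial.
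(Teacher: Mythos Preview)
Your approach is correct and is essentially the same as the paper's: process the array column by column, maintain a polynomial-size description of the product state of the row automata, and invoke Savitch's theorem to get PSPACE. The paper's version is slightly leaner—rather than guessing the next product tuple and running a separate edge test through $B$, it simply guesses the column $v\in\Sigma^m$ directly (polynomial size), checks $v\in L(C)$ in polynomial time, and advances each $N_i$ (tracking a subset of its states) on $v_i$—but this is only an organizational simplification, not a different idea.
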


\begin{proof}[Proof Sketch]
First, we convert each $R_i$ into an equivalent $\epsilon$-NFA $N_i$ (see \cite{HMU:theory3}).  These automata have sizes polynomial in the sizes of the regular expressions.  Then we nondeterministically guess a crossword one column at a time, starting with the first, and for each guessed column, we simulate one step of each of the $N_i$ on its corresponding symbol (this can be done in polynomial time by keeping track of a subset of the state set of each $N_i$).  We accept if ever all the $N_i$ accept simultaneously.  We can also stop after $2^n$ guesses, where $n$ is the total number of states of all the $N_i$ combined.  This nondeterministic algorithm uses polynomial space, and hence can be converted into a deterministic polynomial-space algorithm by Savitch's theorem.
%The idea is to consider an $m\times n$ grid as a single string over the alphabet $\Sigma^m$, each symbol encoding an entire column.  Given $\tuple{R_1,\ldots,R_m}$ and $C$, we first convert each $R_i$ into a DFA $A_i = (Q_i,\Sigma,\delta_i,q_i,F_i)$ from which we construct the product DFA $A = (Q,\Sigma^m,\delta,q,F)$, where
%\begin{align*}
%Q &:= Q_1\times \cdots \times Q_m\;, \\
%q &:= (q_1,\ldots,q_m)\;, \\
%F &:= F_1\times \cdots \times F_m\;,
%\end{align*}
%and $\delta((s_1,\ldots,s_m),(a_1,\ldots,a_m)) = (\delta_1(s_1,a_1),\ldots,\delta_m(s_m,a_m))$ for all $(s_1,\ldots,s_m)\in Q$ and all $a_1,\ldots,a_m\in\Sigma$.  Next, we compute $\Gamma := L(C) \cap \Sigma^m$, then alter $A$ by diverting any transition on a symbol not in $\Gamma$ to a dead state, obtaining a new DFA $A'$.  Finally we test whether $L(A')$ contains a nonempty string.
\end{proof}

\section{Regular expressions over the binary alphabet}
\label{sec:binary}

The alphabets used in Theorems~\ref{thm:HP-fixed-C} and \ref{thm:EE-is-undecidable} are fixed, but they are likely quite large, having to encode all the states of a universal Turing machine $M$.  In this section, we show how to map (in polynomial time) regular expressions over an arbitrary alphabet to regular expressions over the binary alphabet in a way that preserves crosswords.  Thus the crossword existence problem remains undecidable even when restricted to a binary alphabet.

\begin{lemma}\label{lem:binary-alphabet}
There is a function $f$ such that, for any $k\ge 2$ and positive regular expression $R$ over alphabet $\Sigma := \{0,\ldots,k-1\}$, \ $f(k,R)$ is a positive regular expression over the alphabet $\{0,1\}$ such that the following holds: There exists a one-to-one map $\psi_k$ between $\Sigma$-crosswords and $\{0,1\}$-crosswords (that maps $m\times n$ crosswords to $(3k(m+1)+1) \times  (3k(n+1)+1)$ crosswords) such that, for any positive regular expressions $T$ and $U$ over $\Sigma$,
\begin{enumerate}
\item
for any $(T,U)$-crossword $X$, \ $\psi_k(X)$ is a $(f(k,T),f(k,U))$-crossword, and
\item
for every $(f(k,T),f(k,U))$-crossword $Y$, there is a $(T,U)$-crossword $X$ such that $\psi_k(X) = Y$.
\end{enumerate}
Furthermore, $f$ is computable in time polynomial in $k + |R|$.
\end{lemma}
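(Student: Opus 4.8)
The plan is to encode each alphabet symbol as a square block of bits arranged in a \emph{circulant} (rotationally shifting) pattern, so that reading a block horizontally and reading it vertically both recover the original symbol, and so that a single one-dimensional regular expression can verify an entire block-row or block-column at once. Concretely, I would fix a block dimension $d := 3k$ and, for each symbol $a\in\Sigma$, choose a length-$d$ binary codeword $s_a$ carrying a fixed, rotation-detectable marker region together with a unary encoding of $a$; the block $B_a$ is the $d\times d$ matrix whose row $t$ equals $\rotl{t}{s_a}$. The identity I would exploit is that in such a block, reading column $p$ from top to bottom also yields $\rotl{p}{s_a}$. Thus in a tiling of such blocks, every binary \emph{row} at block-height $t$ is a concatenation $\rotl{t}{s_{a_1}}\cdots\rotl{t}{s_{a_n}}$, and every binary \emph{column} at block-offset $p$ is a concatenation $\rotl{p}{s_{b_1}}\cdots\rotl{p}{s_{b_m}}$, where the $a_i$ are the symbols of the corresponding row of $X$ and the $b_j$ those of the corresponding column. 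I would then set $f(k,T):=\bigcup_{t=0}^{d-1}T[a\mapsto\rotl{t}{s_a}]$, the union over all rotations of $T$ with each symbol $a$ substituted by its $t$-rotated codeword, together with border gadgets described below, and define $f(k,U)$ symmetrically. The map $\psi_k$ sends $X$ to this block tiling, framed by an extra block-row and block-column of a distinguished border code and a single thin delimiting line; this accounts for the stated dimensions $(3k(m+1)+1)\times(3k(n+1)+1)$.

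The easy direction (item 1) should follow directly from the two identities above: if $X$ is a $(T,U)$-crossword, then each row of $\psi_k(X)$ realizes some rotation $t$ of a $T$-matching symbol word and hence matches $f(k,T)$, and dually each column matches $f(k,U)$, while the border pieces are matched by the corresponding border disjuncts. The only care needed here is to check that the codewords $\{\rotl{t}{s_a}\}_{a\in\Sigma}$ are uniquely concatenation-decodable for each fixed $t$, so that the substitution into $T$ really describes the intended language and not some accidental re-parse; I would guarantee this by the marker region in $s_a$.

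The hard direction (item 2) is where essentially all the work lies, and I expect it to be the main obstacle. Given an \emph{arbitrary} binary crossword $Y$ whose rows match $f(k,T)$ and columns match $f(k,U)$, I must show that $Y$ is forced to be a genuine block tiling $\psi_k(X)$, ruling out every spurious array. The strategy is to use the border to anchor the grid, fixing the global alignment and the rotation phase at the edges, and then to prove \textbf{rigidity}: the column constraint forces each vertical codeword to be a true $\rotl{p}{s_a}$, which forces any two vertically adjacent rows to be consecutive rotations of one another, so the row-rotation index increments by $1$ as one moves down and cycles with period exactly $d$; symmetrically, the row constraint forces the column-rotation index to increment across columns. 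These two interlocking ``no-slip'' conditions pin the block boundaries in both directions and force the height and width to be the claimed multiples of $d$. Finally I must argue that the symbol read from a block by its horizontal codewords agrees with the symbol read by its vertical codewords, so that each block unambiguously determines a single symbol $x_{i,j}$; this is exactly where the circulant design pays off, since both readings decode the same $s_a$. The delicacy the introduction warns about is in choosing $s_a$ and the border so that the row and column expressions are simultaneously permissive enough to accept every genuine block-row and block-column at every rotation, yet restrictive enough that no misaligned, wrongly phased, or internally inconsistent binary array can satisfy both at once.

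Once rigidity is established, reading off the symbols yields a $\Sigma$-crossword $X$ with $\psi_k(X)=Y$; the horizontal codewords show its rows match $T$ and the vertical codewords show its columns match $U$, so $X$ is a $(T,U)$-crossword and $\psi_k$ is the required bijection onto $(f(k,T),f(k,U))$-crosswords. Injectivity of $\psi_k$ is immediate from unique decodability, and the polynomial-time bound on $f$ follows because $f(k,T)$ is a union of $d=3k$ copies of $T$ with each symbol replaced by an $O(k)$-length codeword, plus $O(1)$ border gadgets, giving size $O(k^2\,|T|)$.
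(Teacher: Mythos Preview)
Your high-level plan is close to the paper's: both use $3k\times 3k$ circulant blocks, a single thin delimiting row/column, and an extra border block-row and block-column (the paper calls these the \emph{alignment} and \emph{calibration} regions), and you correctly flag the rigidity argument for item~2 as the crux. The essential divergence is in how a symbol is encoded in a block. The paper does \emph{not} use $k$ distinct codewords $s_a$; it uses a single codeword $s_0 = 0^{3k-2}11$ and encodes the symbol $c\in\Sigma$ by the \emph{phase}: the square $S_c$ has row $i$ equal to $s_{(3c+i)\bmod 3k}$. Correspondingly, $f(k,R)$ is \emph{not} the union $\bigcup_t R[a\mapsto\rotl{t}{s_a}]$ you propose. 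Only one row per block---the one whose block~0 equals $s_0$---carries $R$, via an ``encoding'' subexpression $E^{(R)}=s_0\,h(R)$ with $h(c)=s_{3c}$; all other rows match a much weaker ``duplication'' expression $D$ that merely requires every block to be some $s_j$ with $j$ in a fixed residue class modulo~3. Rigidity is then proved by eight sudoku-style claims that use the calibration region and the very specific $0^{3k-2}11$ pattern to pin down the phase of every row and column and force each interior block to be some $S_c$.

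Your alternative design may well be workable, but the step you wave at---``the column constraint forces any two vertically adjacent rows to be consecutive rotations of one another''---is exactly where the difficulty hides, and it does not follow from what you have written. In your scheme, the row expression lets each row independently choose \emph{any} rotation $t_r$ and \emph{any} $T$-matching decoding; the column constraint only says each length-$d$ column segment is $\rotl{p_c}{s_b}$ for \emph{some} $b$, with $p_c$ and $b$ allowed to vary from column to column. To conclude that two adjacent rows inside a block use consecutive rotations \emph{and} the same symbol $a$, you would need a property of your (unspecified) codeword family ruling out, for every $a,a'\in\Sigma$ and every $t,t'$ with $t'\ne t+1$ or $a'\ne a$, the possibility that the $d\times 2$ array with rows $\rotl{t}{s_a},\rotl{t'}{s_{a'}}$ has each of its $d$ columns equal to two consecutive bits of some (column-dependent) codeword. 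That is a nontrivial combinatorial condition, and the paper's choice to encode symbols by phase of a single codeword with an isolated $11$ marker is precisely what makes the analogous step tractable there. Without committing to concrete $s_a$'s and border gadgets and carrying out this case analysis, your sketch does not yet establish item~2.
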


\begin{proof}
Fix $k$ and a positive regular expression $R$ over $\Sigma := \{0,\ldots,k-1\}$.
%By replacing $R$ by $R\cap\Sigma^+$ if necessary, we can assume that the empty string does not match $R$.  (This assumption does not affect the existence of crosswords, as the rows and columns of any crossword are all nonempty strings by definition.)
The regular expression $F := f(k,R)$ over $\{0,1\}$, defined below, will be formed from several components.  Let $\ell := 3k$, noting that $\ell \ge 6$.  Any string $w\in L(F)$ will satisfy $|w| \equiv 1 \pmod{\ell}$.  For $0\le i < \ell-1$ and any string $x$ of length $\ell$, define $\rotl{i}{x}$ to be the cyclic shift of $x$ by $i$ places to the left.  That is, if $x = x_0\cdots x_{\ell-1}$, then
\[ \rotl{i}{x} := x_i\cdots x_{\ell-1}x_0\cdots x_{i-1}\;. \]
Now define $s_0 := 0^{\ell-2}11$, and for $0 < i<\ell$ define $s_i := \rotl{i}{s_0}$.  We will use the $s_i$ to encode symbols from $\Sigma$.

Let $\map{h}{\Sigma^*}{\{0,1\}^*}$ be the string homomorphism determined by
\[ h(j) := s_{3j}\;, \]
for all $0\le j < k$.  We extend $h$ to apply to regular expressions over $\Sigma$ in the usual way (see \cite{HMU:theory3} for example).

Given a positive regular expression $R$ over $\Sigma$, the subexpressions making up $F := f(k,R)$ come in four types---alignment, calibration, encoding, and duplication---defined as follows:
\begin{description}
\item[Alignment:]
Define
\[ A := 1^\ell(0^\ell)^+\;. \]
\item[Calibration:]
Define
\begin{align*}
C_0 &:= 0001^{\ell-3}(s_0)^+\;, \\
C_1 &:= 01^{\ell-1}(s_1)^+\;, \\
C_2 &:= 01^{\ell-1}(s_2)^+\;,
\end{align*}
and for $3\le i < \ell-1$, define
\[ C_i := 1^\ell(s_i)^+\;. \]
Now define
\[ C := \bigcup_{i=0}^{\ell-1} C_i\;. \]
\item[Encoding:]
Define
\[ E^{(R)} := s_0(h(R))\;, \]
that is, $s_0$ concatenated with the regular expression $h(R)$.  Note that we make the dependence on $R$ explicit.  We use $E$ as shorthand for $E^{(\Sigma^+)}$ and note that $L(E^{(R)}) \subseteq L(E)$, because $R$ is positive.
\item[Duplication:]
Define
\[ D_0 := \bigcup_{1\le c<k} s_{3c}\;, \]
and for $j \in \{1,2\}$, define
\[ D_j := \bigcup_{0\le c<k} s_{3c+j}\;. \]
Define
\[ D :=  D_0(D_0)^+ \cup D_1(D_1)^+ \cup D_2(D_2)^+\;. \]
\end{description}
Finally, define
\[ F := 1(A \cup C) \cup 0(D \cup E^{(R)})\;. \]
This completes the description of $F = f(k,R)$.  It is evident that $f$ is computable in the specified time bounds.  Notice that all subexpressions of $F$ except $E^{(R)}$ depend only on $k$ and not on $R$.

Next we show how to convert any $\Sigma$-crossword $X$ into a unique $\{0,1\}$-crossword $Y = \psi_k(X)$ such that, for any positive regular expressions $T$ and $U$ over $\Sigma$, \ $X$ is a $(T,U)$-crossword if and only if $Y$ is an $(F,G)$-crossword, where $F := f(k,T)$ and $G := f(k,U)$.  It will help first to see an example of how this is done.  Suppose $\Sigma = \{0,1,2,3,4\}$.  Then each cell of a $\Sigma$-crossword is encoded by a $15\times 15$ square in the $\{0,1\}$-crossword, as shown in Figure~\ref{fig:encoded-letters}.
\begin{figure}
\begin{center}
\input{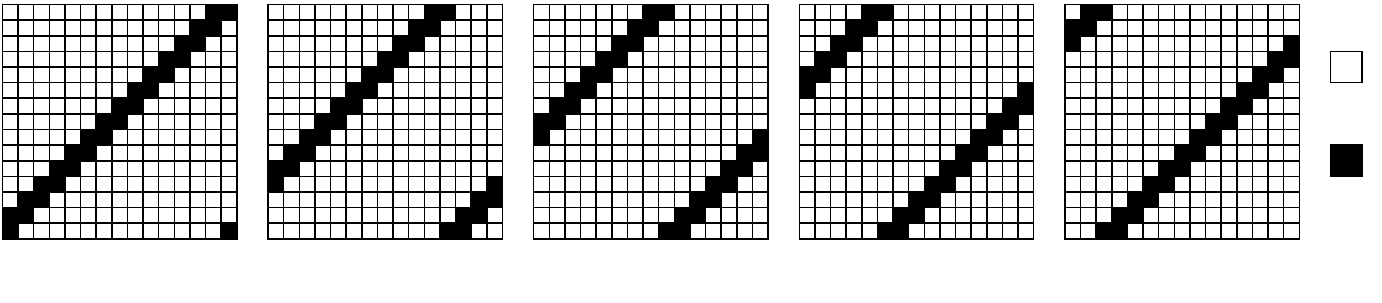_t}
\caption{The $15\times 15$ squares $S_0,\ldots,S_4$ used to encode the individual letters $0,\ldots,4$, respectively.  A white cell denotes $0$, and a black cell denotes $1$.}\label{fig:encoded-letters}
\end{center}
\end{figure}
Generally, for $0\le c < k$ we define $S_c$ be the $\ell\times\ell$ square whose $i$th row (starting with $i=0$) is $s_{(3c+i)\bmod{\ell}}$.  These squares are pairwise distinct, and we use $S_c$ to encode the letter $c$.  Notice that the $S_c$ are symmetric (with respect to matrix transpose), and so the $i$th column of $S_c$ is also $s_{(3c+i)\bmod{\ell}}$.  In Figure~\ref{fig:sample-encoding},
\begin{figure}
\begin{center}
\input{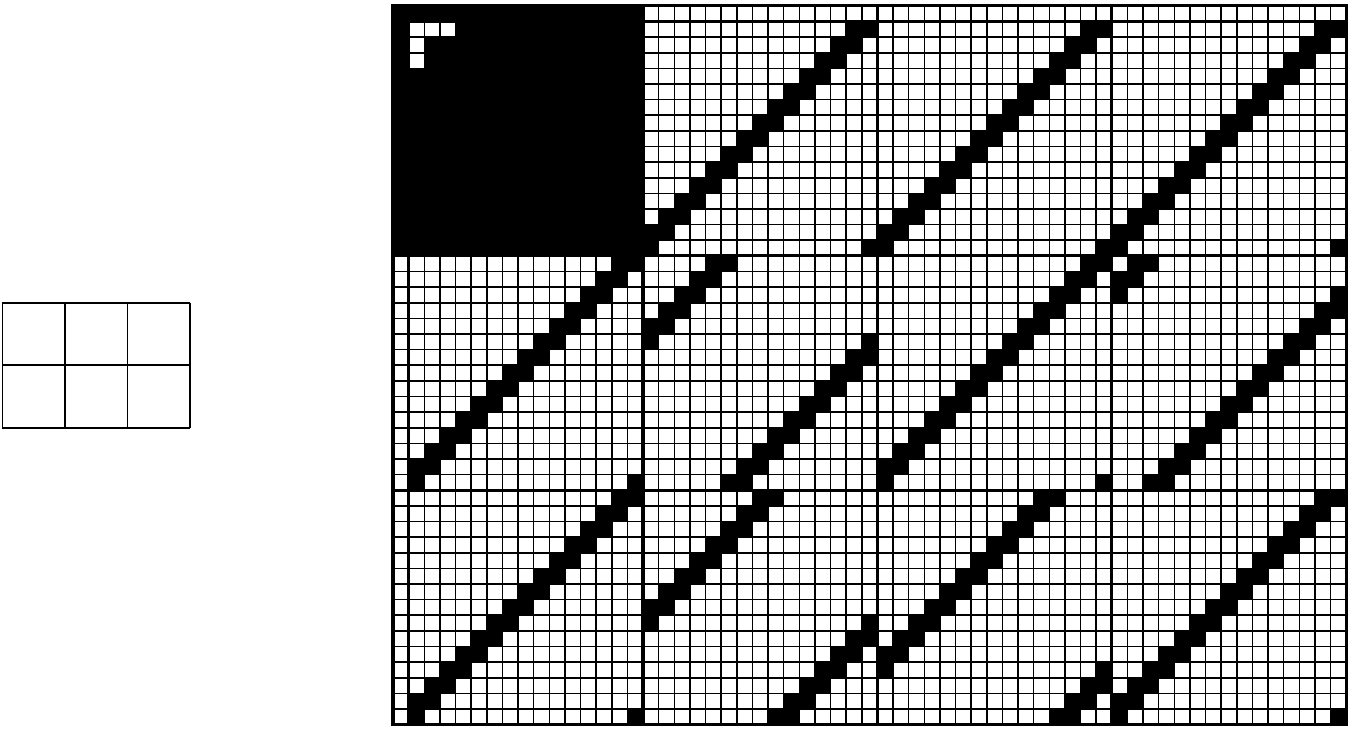_t}
\end{center}
\caption{The encoding $\psi_5(X)$ of a sample $2\times 3$ $\Sigma$-crossword $X$, where $\Sigma = \{0,1,2,3,4\}$.  The slightly thicker lines give the boundaries between the $15\times 15$ squares.}\label{fig:sample-encoding}
\end{figure}
we show the encoding $\psi_k(X)$ of a sample $2\times 3$ $\Sigma$-crossword $X$.  The top row and left column form the \emph{alignment region}, and these two strings will both match $1A$.  The rest of the crossword is made up of $(\ell\times \ell)$-size squares $Q_{t,u}$ for $t,u\ge 0$, with $Q_{0,0}$ being the top leftmost square, $Q_{0,1}$ immediately to its right, $Q_{1,0}$ immediately below it, etc.  Squares of the form $Q_{0,u}$ and $Q_{t,0}$ form the \emph{calibration region}, and, except for $Q_{0,0}$, all these squares are equal to $S_0$.  The rows and columns making up this region all match $1C$.  The rest of the crossword (squares $Q_{t,u}$ for $t,u\ge 1$) forms the \emph{encoding region}, each square encoding a single corresponding entry in the $\Sigma$-crossword.  Rows and columns that intersect this region all match $0(D\cup E)$.

Now the detailed description.  Let $X$ be \emph{any} $\Sigma$-crossword with $m$ rows and $n$ columns, where $m,n\ge 1$.  For $1\le t \le m$ and $1\le u\le n$, let $x_{t,u}$ be the symbol in row~$t$ and column~$u$ of $X$.  Then we define a $\{0,1\}$-crossword $Y = \psi_k(X)$ as follows: $Y$ has dimensions $((m+1)\ell + 1)\times ((n+1)\ell + 1)$, where $\ell = 3k$ as above.  It will be convenient to index the rows of $Y$ as $(-1),\ldots,(m+1)\ell - 1$ and the columns as $(-1),\ldots,(n+1)\ell - 1$.  With this indexing, the alignment region comprises row~$(-1)$ and column~$(-1)$, and each square $Q_{t,u}$ (for $0\le t\le m$ and $0\le u\le n$) is the intersection of rows~$t\ell,\ldots,(t+1)\ell-1$ with columns $u\ell,\ldots,(u+1)\ell-1$.  We will define $Y$ row by row, with rows $r_{-1},\ldots,r_{(m+1)\ell-1}$, then discuss the columns.  (It will help to refer back to Figure~\ref{fig:sample-encoding}.)
\begin{itemize}
\item
Set $r_{-1} := 1^{\ell+1}0^{n\ell}$.  Then $r_{-1}$ matches $1A$.
\item
Set
\begin{align*}
r_0 &:= 10001^{\ell-3}(s_0)^n\;, \\
r_1 &:= 101^{\ell-1}(s_1)^n\;, \\
r_2 &:= 101^{\ell-1}(s_2)^n\;.
\end{align*}
Then $r_0$, $r_1$, and $r_2$ match $1C_0$, $1C_1$, and $1C_2$, respectively.
\item
For $3\le i < \ell$, set $r_i := 1^{\ell+1}(s_i)^n$.  Then $r_i$ matches $1C_i$.
\item
For $1\le t\le m$, let $x := x_{t,1}\cdots x_{t,n}$.  For $0\le i<\ell$, set
\[ r_{t\ell+i} := 0s_is_{(3x_{t,1}+i)\bmod{\ell}}\cdots s_{(3x_{t,n}+i)\bmod{\ell}}\;. \]
Note that for $1\le u\le n$, block~$u$ of $r_{t\ell+i}$ equals $\rotl{i}{h(x_{t,u})}$.  Also notice that if all the rows of $X$ match some positive regular expression $T$ over $\Sigma$, then all the $r_{t\ell}$ match $0E^{(T)}$.  The rest of the rows $r_{t\ell+i}$ match $0D$; in particular, $r_{t\ell+i}$ matches $0D_{i\bmod 3}$.
\end{itemize}
This completes the definition of the map $\psi_k$.

We have established that if the rows of $X$ all match some positive regular expression $T$, then each row of $Y$ matches $F = 1(A\cup C)\cup 0(D\cup E^{(T)})$, and from the arrangement of the rows, we can see by symmetry that if the columns of $X$ all match some positive regular expression $U$ over $\Sigma$, then each column of $Y$ matches $1(A\cup C)\cup 0(D \cup E^{(U)})$ in a similar manner:
\begin{itemize}
\item
$c_{-1} = 1^{\ell+1}0^{m\ell}$, matching $1A$.
\item
$c_0 = 10001^{\ell-3}(s_0)^m$, \ $c_1 = 101^{\ell-1}(s_1)^m$, and $c_2 = 101^{\ell-1}(s_2)^m$, matching $1C_0$, $1C_1$, and $1C_2$, respectively.
\item
For $3\le i < \ell$, \ $c_i = 1^{\ell+1}(s_i)^m$, matching $1C_i$.
\item
For $1\le u\le n$, letting $x := x_{1,u}\cdots x_{m,u}$, and for $0 \le i < \ell$, we have
\[ c_{u\ell + i} := 0s_is_{(3x_{1,u}+i)\bmod{\ell}}\cdots s_{(3x_{m,u}+i)\bmod{\ell}}\;. \]
That is, for $1\le t\le m$, block~$t$ of $c_{u\ell + i}$ equals 
Since $x$ matches $U$, we have that $c_{u\ell}$ matches $0E^{(U)}$, and the rest of the $c_{u\ell+i}$ match $0D$.
\end{itemize}
This establishes that, if $X$ is a $(T,U)$-crossword, then $Y = \psi_k(X)$ is an $(F,G)$-crossword (of the correct size), where $F := f(k,T)$ and $G = f(k,U)$.  It is also clear that, since $Y$ has the original crossword $X$ completely encoded within it, $\psi_k$ is a one-to-one map.

It remains to show that for any $(F,G)$-crossword $Y$, there is a $(T,U)$-crossword $X$ such that $\psi_k(X) = Y$, where $T$, $U$, $F$, and $G$ are as above.  We establish this through a series of claims.  Each claim is proved using ``sudoku-like'' arguments.  Let $Y$ be any $(F,G)$-crossword.  First observe that any string $w$ matching $A\cup C\cup D\cup E$ has length $v\ell$ for some $v\ge 2$, and so we can chop $w$ into substrings of length $\ell$ that we call \emph{blocks} (at least two), starting with block~$0$ through block~$v-1$.  This forces $Y$, minus its top row and left column, to be divided into $(\ell\times \ell)$-size squares $Q_{t,u}$ as described earlier, the rows and columns of each $Q_{t,u}$ being blocks in the rows and columns of $Y$ that intersect $Q_{t,u}$.  $Y$ has squares $Q_{t,u}$ for each $0\le t\le m$ and $0\le u\le n$ for some $m,n\ge 1$.  As before, we index the rows and columns of $Y$ as $-1,\ldots,(m+1)\ell-1$ and $-1,\ldots,(n+1)\ell-1$, respectively.

We extend the block concept to strings of length $v\ell+1$, e.g., the rows and columns of an $(F,G)$-crossword, by ignoring the first symbol in the string, that is, block~$0$ starts with the second symbol of the string.

\begin{claim}\label{claim:squares}
Each square $Q_{t,0}$ and $Q_{0,u}$ of $Y$, for $1\le t\le m$ and $1\le u\le n$, has exactly two $1$'s in each of its rows and each of its columns, the rest of the entries being $0$.
\end{claim}

\begin{proof}[Proof of Claim~\ref{claim:squares}]
Let $w$ be any string matching $A\cup C\cup D\cup E$.  Then each block of $w$, other than block~$0$, has at \emph{most} two $1$'s; in particular, it is either $0^\ell$ (if $w$ matches $A$), or it is of the form $s_i$ for some $i$ (if $w$ matches $C\cup D\cup E$).  Moreover, block~$0$ of $w$ has at \emph{least} two $1$'s.  Thus for $1\le u\le n$, square $Q_{0,u}$ has each of its rows containing at most two $1$'s and each of its columns containing at least two $1$'s.  The only way this can happen is if each row and column of $Q_{0,u}$ contains \emph{exactly} two $1$'s.  A similar argument shows that each row and column of $Q_{t,0}$ contains exactly two $1$'s, for $i\le t\le m$.
\end{proof}

\begin{claim}\label{claim:A}
No row other than the topmost, and no column other than the leftmost, matches $1A$.
\end{claim}

\begin{proof}[Proof of Claim~\ref{claim:A}]
Consider any row except the topmost.  This row is either $0r$ or $1r$ for some string $r$ matching $A\cup C\cup D\cup E$, and it intersects either $Q_{0,1}$ or else $Q_{t,0}$ for some $t\ge 1$.  In the former case, block~$1$ of $r$ (i.e., the block of $r$ intersecting $Q_{0,1}$) has a $1$, and so $r$ cannot match $A$; in the latter case, block~$0$ of $r$ has a $0$, and so again, $r$ cannot match $A$.  (Both cases follow from Claim~\ref{claim:squares}.)  Thus the row in question cannot match $1A$.  The same argument applies to the columns except the leftmost; none of them can match $1A$.
\end{proof}

\begin{claim}\label{claim:alignment}
The topmost row and leftmost column of $Y$ each match $1A$.
\end{claim}

\begin{proof}[Proof of Claim~\ref{claim:alignment}]
Observe that any string $w$ matching $C$ must have \emph{at least three} $1$'s in its block~$0$.  Now consider any row of $Y$ that intersects square $Q_{1,0}$.  This row is of the form $0r$ or $1r$, for some $r$ matching $A\cup C\cup D\cup E$.  By Claim~\ref{claim:A}, this row does not match $1A$, and so it must match $1C\cup 0(D\cup E)$.  However, $r$ cannot match $C$ because (by Claim~\ref{claim:squares}) $r$ has only two $1$'s in block~$0$.  Thus the row must match $0(D\cup E)$---in particular, it starts with $0$.  That means that the leftmost column (column~$(-1)$) has all $0$'s in its block~$1$, and so it cannot match $1C\cup 0(D\cup E)$, and thus it must match $1A$.  A similar, transposed argument shows that the topmost row must also match $1A$.
\end{proof}

\begin{claim}\label{claim:calibration}
Rows~$0,\ldots,\ell-1$ and columns $0,\ldots,\ell-1$ of $Y$ each match $1C$, and the rows and columns of $Y$ starting with index $\ell$ each match $0(D\cup E)$.
\end{claim}

\begin{proof}[Proof of Claim~\ref{claim:calibration}]
By the previous claim, the topmost row and leftmost column of $Y$ each match $1A = 1^{\ell+1}(0^\ell)^+$.  Thus rows $0,\ldots,\ell-1$ each start with $1$, and the rows starting with index $\ell$ each start with $0$.  By Claim~\ref{claim:A}, none of these rows match $1A$, so rows~$0$ through $\ell-1$ all must match $1C$ and the rest must match $0(D\cup E)$.  A similar argument holds for the columns.
\end{proof}

Notice that each row and column of $Q_{0,0}$ matches $(000\cup 011\cup 111)1^{\ell-3}$.  For $0\le i<(m+1)\ell$, let $r_i$ denote the row of $Y$ with index $i$, and for $0\le j < (n+1)\ell$ let $c_j$ denote the column of $Y$ with index $j$.  Rows $r_0,\ldots,r_{\ell-1}$ and columns $c_0,\ldots,c_{\ell-1}$ all match $1C$ by Claim~\ref{claim:calibration}, and the rest match $0(D\cup E)$.

\begin{claim}\label{claim:calibration-order}
For all $0\le i <\ell$, \ $r_i$ and $c_i$ both match $1C_i$.
\end{claim}

\begin{proof}[Proof of Claim~\ref{claim:calibration-order}]
First we show that $r_0$ and $c_0$ both match $1C_0$.  Suppose that $r_0$ does not match $1C_0$ (the argument for $c_0$ is similar).  Then (since $r_0$ matches $1C$) $r_0$ matches $1C_i$ for some $i\ge 1$, and so has a prefix matching $1(0\cup 1)1^{\ell-1}$, which makes $c_1,\ldots,c_{\ell-1}$ all have $11$ as a prefix.  This in turn implies that each of these columns must match $1C_j$ for some $j\ge 3$.  Now notice that block~$1$ of any string $x$ matching $C_j$ is $s_j$, and so if $3\le j<\ell$, then $x$ must have $0$ as the next to last symbol in its block~$1$.  From these facts it follows that the next to last row of $Q_{1,0}$ (i.e., block~$0$ of $r_{2\ell-2}$) matches $(0\cup 1)0^{\ell-1}$.  But this is impossible, because this block must have two $1$'s by Claim~\ref{claim:squares}.

Next we show that $r_1$ and $c_1$ match $1C_1$ and $r_2$ and $c_2$ match $1C_2$.  By what we just showed, $r_1$ $r_2$ both have prefix $10$ (because $c_0$ matches $1C_0$), and so they each match $1(C_0\cup C_1\cup C_2)$.  Neither of them can match $1C_0$, however: Consider the $2\times 2$ square $S$ forming the intersection or rows~$1,2$ with columns~$1,2$.  If either $r_1$ or $r_2$ matches $1C_0$, then $S$ contains a $0$, and hence at least one of the columns $c_1$ or $c_2$ must also match $1C_0$, which implies that $S$ contains \emph{all} $0$'s, which means that both $r_1$ and $r_2$ match $1C_0$.  But this would make $c_{2\ell-1}$ have prefix $0111$ putting three $1$'s in the last column of $Q_{0,1}$ and contradicting Claim~\ref{claim:squares}.  (By a similar argument, neither $c_1$ nor $c_2$ can match $1C_0$.)  Thus we have $r_1$ and $r_2$ both matching $1(C_1\cup C_2)$.  Now $r_2$ cannot match $1C_1$, for if it does, then $c_{2\ell-2}$ has prefix either $0101$ or $0111$, neither of which is possible because block~$0$ of $c_{2\ell-2}$ must be $s_j$ for some $j$.  Thus $r_2$ matches $1C_2$.  We have one more case to eliminate, i.e., showing that $r_1$ cannot match $1C_2$.  Suppose $r_1$ matches $1C_2$.  Then column $c_{2\ell-2}$ has prefix $0100$, and the only way this can happen is if $c_{2\ell-2}$ has prefix $0s_{\ell-1}$.  But that means that row $r_{\ell-1}$ has a $1$ as the next to last symbol of its block~$1$.  Since $r_{\ell-1}$ matches $1C$, this can only happen if $r_{\ell-1}$ matches $1(C_0\cup C_1)$, whence it has $10$ as a prefix.  This puts a $0$ as the last symbol of block~$0$ of $c_0$, but this is impossible, because $c_0$ matches $1C_0$ and hence has $10001^{\ell-3}$ as a prefix.  Thus $r_1$ cannot match $1C_2$, and so it matches $1C_1$.  A symmetric argument holds for $c_1$ and $c_2$.

Finally, we show that $r_i$ matches $1C_i$ for $3\le i < \ell$.  This is by induction on $i$, starting with $i=3$, with the inductive hypothesis that $r_j$ matches $1C_j$ for all $0\le j < i$.  We have then that $c_{2\ell-i-1}$ has prefix $0^i1$ and $c_{2\ell-i}$ has prefix $0^{i-1}11$.  Since both of these columns match $0(D\cup E)$ and hence must each start with $0s_j$ for some $j$'s, we can only have that $c_{2\ell-i-1}$ has prefix $0^i11$ and $c_{2\ell-i}$ has prefix $0^{i-1}110$.  Then block~$1$ of $r_i$ must be $s_i$, and it follows that $r_i$ matches $1C_i$.
\end{proof}

\begin{claim}\label{claim:cal-squares}
$Q_{t,0} = Q_{0,u} = S_0$ for all $1\le t \le m$ and $1\le u \le n$.
\end{claim}

\begin{proof}[Proof of Claim~\ref{claim:cal-squares}]
This follows immediately from Claim~\ref{claim:calibration-order}.
\end{proof}

\begin{claim}\label{claim:encoding-rows-columns}
For each $1\le t \le m$ and each $1\le u \le n$, \ $r_{t\ell}$ matches $0E^{(T)}$ and $c_{u\ell}$ matches $0E^{(U)}$.
\end{claim}

\begin{proof}[Proof of Claim~\ref{claim:encoding-rows-columns}]
By assumption, all rows of $Y$ match $F = 1(A\cup C)\cup 0(D\cup E^{(T)})$, and all columns of $Y$ match $G = 1(A\cup C)\cup 0(D\cup E^{(U)})$.  By Claim~\ref{claim:cal-squares}, each row $r_{t\ell}$ and each column $c_{u\ell}$ has prefix $0s_0$, and thus none can match $1(A\cup C) \cup 0D$.  Thus each such row must match $0E^{(T)}$, and each such column matches $0E^{(U)}$.
\end{proof}

\begin{claim}\label{claim:encoding-squares}
For all $t,u$ with $1\le t \le m$ and $1\le u \le n$, there exists a unique $x_{t,u}\in\Sigma$ such that $Q_{t,u} = S_{x_{t,u}}$.
\end{claim}

\begin{proof}[Proof of Claim~\ref{claim:encoding-squares}]
For simplicity, we will assume $t=u=1$; the same argument works for any $t,u$.  By Claim~\ref{claim:calibration}, rows $r_\ell,\ldots,r_{2\ell-1}$ and columns $c_\ell,\ldots,c_{2\ell-1}$ all match $0(D\cup E)$.  By Claim~\ref{claim:cal-squares}, the $i$th row of $Q_{1,0}$ (i.e., block~$0$ of $r_{\ell+i}$) is $s_i$, for $0\le i<\ell$.  We have $r_\ell$ matching $0E$ by Claim~\ref{claim:encoding-rows-columns}.  For $0\le j<\ell$, let $b_j$ be block~$1$ of $r_{\ell+j}$ (i.e., the $j$th row of $Q_{1,1}$), and let $b_j'$ be block~$1$ of $c_{\ell+j}$ (i.e., the $j$th column of $Q_{1,1}$).  Row $r_\ell$ matching $0E$ makes $b_0 = s_{3x}$ for some unique $0\le x < k$.  For $1\le i <\ell$, row $r_{\ell+i}$, having $s_i$ as its block~$0$, cannot match $0E$.  Thus $r_{\ell+i}$ matches $0D$, and in fact, it must match $0D_{i\bmod{3}}$, owing to its block~$0$, and this makes $b_i = s_{(3v+i)\bmod{\ell}}$ for some $0\le v<k$.  The same goes for the columns of $Q_{1,1}$.  Furthermore, notice that $D$ and $E$ ensure that the columns $b_j'$ and $b_{(j-1)\bmod{\ell}}'$ are distinct for any $0\le j<\ell$, because $j$ and $j-1$ have different remainders modulo $3$.

We show by induction on $1\le i<\ell$ that $b_i = s_{(3x+i)\bmod{\ell}}$, and this will imply that $Q_{1,1} = S_x$, finishing the proof of the claim.  Now assume (inductive hypothesis) that $b_{i-1} = s_{(3x+i-1)\bmod{\ell}}$ (we have established this for $i=1$).  We have $b_i = s_{3v+i}$ for some $0\le v<k$, and so it suffices to show that $v=x$.  Suppose $v\ne x$.  Then there is no position where the strings $b_i$ and $b_{i-1}$ share a $1$ in common.  The two $1$'s in $b_{i-1}$ occur in columns $b_{z_1}'$ and $b_{z_2}'$ of $Q_{1,1}$, where $z_1 := (-3x-i)\bmod{\ell}$ and $z_2 := (z_1-1)\bmod{\ell} = (-3x-i-1)\bmod{\ell}$, and by assumption, these $1$'s are then immediately followed by $0$'s in their respective columns.  Since $b_{z_1}' = s_{j_1}$ and $b_{z_2}' = s_{j_2}$ for some $0\le j_1,j_2<\ell$, and they share the substring $10$ in the same position in each, it must be that $j_1 = j_2$.  But this contradicts what we said above about columns being distinct.  Therefore, $v=x$, and we are done.
\end{proof}

\begin{claim}\label{claim:crossword-encoding}
For all $1\le t \le m$ and $1\le u \le n$, let $x_{t,u}\in\Sigma$ be the unique symbol such that $Q_{t,u} = S_{x_{t,u}}$ (cf.\ Claim~\ref{claim:encoding-squares}).  Then the $m\times n$ array $X$ whose $(t,u)$th entry is $x_{t,u}$ forms a $(T,U)$-crossword.
\end{claim}

\begin{proof}[Proof of Claim~\ref{claim:crossword-encoding}]
For $1\le t \le m$, let $d_t := x_{t,1}\cdots x_{t,n}$, and for $1\le u \le n$, let $e_u := x_{1,u}\cdots x_{m,u}$.  We show that the $d_t$ all match $T$ and the $e_u$ all match $U$.  We have
\[ r_{t\ell} = 0s_0s_{3x_{t,1}}\cdots s_{3x_{t,n}} = 0s_0(h(d_t))\;, \]
and because of the symmetry of the squares $Q_{t,u}$, we also have
\[ c_{u\ell} = 0s_0s_{3x_{1,u}}\cdots s_{3x_{m,u}} = 0s_0(h(e_u))\;, \]
for all $1\le t\le m$ and $1\le u\le n$.  By Claim~\ref{claim:encoding-rows-columns}, $r_{t\ell}$ matches $0E^{(T)} = 0s_0(h(T))$ and $c_{u\ell}$ matches $0E^{(U)} = 0s_0(h(U))$.  Then because $h$ is clearly a one-to-one map, it must be that $d_t$ matches $T$ and $e_u$ matches $u$.
\end{proof}

Finally, if $X$ is as defined in Claim~\ref{claim:crossword-encoding}, then is clear by our definition of $\psi_k$ above that $Y = \psi_k(X)$.  This ends the proof of Lemma~\ref{lem:binary-alphabet}.
\end{proof}

The next two theorems are just corollaries of Lemma~\ref{lem:binary-alphabet}.  They strengthen Theorems~\ref{thm:HP-fixed-C} and \ref{thm:EE-is-undecidable}, respectively.

\begin{theorem}\label{thm:HP-fixed-C-binary}
Let $G := f(k,C)$, where $C$ is as in Theorem~\ref{thm:HP-fixed-C}, and $k$ is the size of the alphabet used in that proof.  Then the following problem is m-equivalent to the Halting Problem:
\begin{quote}
Given a positive regular expression $F$ over the alphabet $\{0,1\}$, does an $(F,G)$-crossword exist?
\end{quote}
\end{theorem}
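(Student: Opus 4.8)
The plan is to derive Theorem~\ref{thm:HP-fixed-C-binary} as an immediate corollary of Lemma~\ref{lem:binary-alphabet} together with Theorem~\ref{thm:HP-fixed-C}, in exactly the way the sentence preceding the statement advertises. Let $\Sigma$ and $C$ be the fixed alphabet and fixed positive regular expression from Theorem~\ref{thm:HP-fixed-C}, and let $k = |\Sigma|$ so that (after renaming) $\Sigma = \{0,\ldots,k-1\}$. Set $G := f(k,C)$, which is a fixed positive regular expression over $\{0,1\}$, computed once and for all, independent of any input. The decision problem in the statement takes as input a positive regular expression $F$ over $\{0,1\}$ and asks whether an $(F,G)$-crossword exists; I must show this problem is m-equivalent to the Halting Problem.

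\textbf{The easy direction (m-reduces to Halting).} First I would observe that the $(F,G)$-crossword existence problem is c.e.: one can enumerate candidate $\{0,1\}$-crosswords by increasing size and check, in finite time, whether any given array has all rows matching $F$ and all columns matching $G$. Any c.e.\ problem m-reduces to the Halting Problem, so this direction is immediate and requires no reference to the binary-encoding machinery at all.

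\textbf{The hard direction (Halting m-reduces to it).} Here I would invoke the composition of the two earlier reductions. By Theorem~\ref{thm:HP-fixed-C} there is a computable function $g$ sending each string $w$ to a positive regular expression $R := g(w)$ over $\Sigma$ with $(R,C)$ plural, such that $M$ halts on $w$ iff an $(R,C)$-crossword exists. Now apply $f(k,\cdot)$ to $R$, setting $F := f(k,R) = f(k,g(w))$; this is computable from $w$ and yields a positive regular expression over $\{0,1\}$. The two parts of Lemma~\ref{lem:binary-alphabet}, applied with $T := R$ and $U := C$, give that an $(F,G)$-crossword exists if and only if an $(R,C)$-crossword exists: part~(1) produces $\psi_k(X)$ from any $(R,C)$-crossword $X$, and part~(2) recovers an $(R,C)$-crossword from any $(F,G)$-crossword. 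Chaining the two biconditionals, $M$ halts on $w$ iff an $(F,G)$-crossword exists, so the map $w \mapsto f(k,g(w))$ m-reduces the Halting Problem to the $(F,G)$-crossword existence problem.

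\textbf{The main obstacle} is not in the logic of the reduction — which is a routine composition — but in confirming that the fixed expression $G$ is genuinely input-independent. I would emphasize that $f(k,\cdot)$ depends only on $k$ outside of its encoding subexpression $E^{(R)}$, and that $G = f(k,C)$ is built from the fixed $C$ of Theorem~\ref{thm:HP-fixed-C}; since $C$ does not depend on the input $w$ (only on the universal machine $M$), neither does $G$. The only mild subtlety to check is that Lemma~\ref{lem:binary-alphabet} requires $R$ and $C$ to be \emph{positive}, which holds because $g(w)$ is positive (Theorem~\ref{thm:HP-fixed-C}) and $C$ is positive by construction; pluralness of $(R,C)$ is not needed for Lemma~\ref{lem:binary-alphabet}, so no further hypothesis must be verified. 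With these observations in place the theorem follows with essentially no additional computation.
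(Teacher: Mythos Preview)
Your proposal is correct and follows essentially the same approach as the paper: both show the problem is c.e.\ for one direction and, for the other, apply the map $f(k,\cdot)$ from Lemma~\ref{lem:binary-alphabet} to pass from the $(R,C)$-crossword existence problem of Theorem~\ref{thm:HP-fixed-C} to the binary $(F,G)$-crossword existence problem. The only cosmetic difference is that the paper reduces from the intermediate problem of Theorem~\ref{thm:HP-fixed-C} directly, whereas you compose with $g$ and reduce from the Halting Problem itself; these are equivalent since Theorem~\ref{thm:HP-fixed-C} already established the m-equivalence.
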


\begin{proof}
The problem is c.e.  For the other direction, we m-reduce from the problem of Theorem~\ref{thm:HP-fixed-C} via the map $f(k,\cdot)$.  Given any positive regular expression $R$ over an alphabet of size $k$, which we can assume is $\{0,\ldots,k-1\}$, we set $F := f(k,R)$.  Then an $(F,G)$-crossword exists if and only if an $(R,C)$-crossword exists, by Lemma~\ref{lem:binary-alphabet}.
\end{proof}

\begin{theorem}\label{thm:EE-binary}
The following problem is m-equivalent to the Halting Problem:
\begin{quote}
Given a positive regular expression $E'$ over the alphabet $\{0,1\}$, does an $(E',E')$-crossword exist?
\end{quote}
\end{theorem}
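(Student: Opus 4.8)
The plan is to m-reduce the $(E,E)$-crossword existence problem of Theorem~\ref{thm:EE-is-undecidable} directly to the binary version, using the alphabet-reduction function $f$ from Lemma~\ref{lem:binary-alphabet}. The crucial observation is that Lemma~\ref{lem:binary-alphabet} sends a $(T,U)$-crossword to a $(f(k,T),f(k,U))$-crossword, and when we specialize to $T = U = E$ the row and column expressions of the image coincide, both being $f(k,E)$. Thus the reduction automatically produces an instance with equal row and column expressions, which is exactly what the $(E',E')$ problem requires; no extra construction is needed to force equality. This makes the theorem a direct corollary, parallel to the proof of Theorem~\ref{thm:HP-fixed-C-binary}.

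Concretely, I would first note that the problem is c.e.\ (one can enumerate all finite $\{0,1\}$-arrays and test each for being an $(E',E')$-crossword), and hence m-reduces to the Halting Problem. For the converse direction, I would take a positive regular expression $E$ over an alphabet of size $k$, which we may assume is $\{0,\ldots,k-1\}$ after relabeling---an instance of the undecidable problem of Theorem~\ref{thm:EE-is-undecidable}---and set $E' := f(k,E)$. By Lemma~\ref{lem:binary-alphabet}, $E'$ is a positive regular expression over $\{0,1\}$, and it is computable (indeed in polynomial time) from $E$.

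I would then invoke both parts of Lemma~\ref{lem:binary-alphabet} with $T = U = E$: part~(1) sends any $(E,E)$-crossword $X$ to a $(f(k,E),f(k,E)) = (E',E')$-crossword $\psi_k(X)$, and part~(2) guarantees that every $(E',E')$-crossword arises in this way from some $(E,E)$-crossword. Together these yield that an $(E',E')$-crossword exists if and only if an $(E,E)$-crossword exists. Hence the map $E \mapsto f(k,E)$ m-reduces the (undecidable) $(E,E)$-crossword existence problem to the $(E',E')$-crossword existence problem over the binary alphabet, which together with the c.e.\ direction establishes m-equivalence with the Halting Problem.

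There is essentially no obstacle in this argument, since all the work has been done in Lemma~\ref{lem:binary-alphabet} and Theorem~\ref{thm:EE-is-undecidable}. The only point meriting a moment's attention is confirming that the diagonal case $T = U$ of Lemma~\ref{lem:binary-alphabet} is already available; it is, because the lemma is stated for arbitrary positive $T$ and $U$, so specializing to $T = U = E$ is immediate and the resulting binary row and column expressions are literally identical.
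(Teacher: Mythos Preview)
Your proposal is correct and follows essentially the same approach as the paper: note the problem is c.e., then m-reduce from the $(E,E)$-crossword existence problem of Theorem~\ref{thm:EE-is-undecidable} via the map $E \mapsto f(k,E)$, invoking Lemma~\ref{lem:binary-alphabet} with $T=U=E$ so that the resulting row and column expressions coincide. The paper's proof is terser but structurally identical.
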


\begin{proof}
This works as in the proof of Theorem~\ref{thm:HP-fixed-C-binary}.  The problem is c.e.  Conversely, we m-reduce from the problem of Theorem~\ref{thm:EE-is-undecidable}.  Given a positive regular expression $E$, we can effectively determine the size $k$ of $E$'s alphabet.  Then adjusting the alphabet to $\{0,\ldots,k-1\}$, we let $E' := f(k,E)$, where $f$ is the function of Lemma~\ref{lem:binary-alphabet}.  Then $E'$ is positive, and an $(E',E')$-crossword exists if and only if an $(E,E)$-crossword exists.
\end{proof}

\section{Square crosswords}
\label{sec:square}

An $m\times n$ $\Sigma$-crossword is \emph{square} iff $m=n$.  In this section, we explain briefly why the complexities of all our problems are unaffected by restricting all crosswords to be square.

First, in the proof of Lemma~\ref{lem:HP-to-RC}, the $R$ and $C$ we construct are such that if an $m\times n$ $(R,C)$-crossword exists, then $m\ge n$.  This is because each row records a configuration of the machine $M$, and each column records a tape cell \emph{that is scanned at least once}, and $M$ can only scan at most as many different tape cells as there are configurations.  Thus to allow a square $(R,C)$-crossword, we only need to pad with (blank) cells that are never scanned.  Letting $C' := C\cup [B]^+$, we get that an $(R,C)$-crossword exists if and only if an $(R,C')$-crossword exists, if and only if a square $(R,C')$-crossword exists.

Next, the map $\rho$ of Lemma~\ref{lem:HP-to-EE} clearly preserves squareness: every $m\times n$ $(R,C)$-crossword (for $m,n\ge 2$) maps to an $(m+1)\times(n+1)$ $(E,E)$-crossword and vice versa.  Finally, the maps $\psi_k$ of Lemma~\ref{lem:binary-alphabet} also preserve squareness.  An $m\times n$ $(T,U)$-crossword maps under $\psi_k$ to a $(3k(m+1)+1)\times(3k(n+1)+1)$ $(F,G)$-crossword and vice versa.

\section{Complexity}
\label{sec:complexity}

It has been observed in \cite{Takahashi:regex-crossword}, and independently by us, that if separate regular expressions for each of the rows and columns are specified for a particular size grid, then the existence problem is NP-hard, and this is true even for a binary alphabet.  The proof (by FrankW) described in \cite{Takahashi:regex-crossword} is via a polynomial reduction from VERTEX~COVER, which, for the sake of completeness, we reproduce in Appendix~\ref{sec:NP-hard} as well as a reduction from 3-SAT that we found independently.  Both reductions map to regular expressions over binary alphabets.  In the former reduction, the regular expressions constructed for the columns, except for the first, are all the same fixed expression $0^*1(0\cup 1)^*$, independent of the input.  In our latter reduction, all the columnar regular expressions are the same fixed expression $0^+\cup 1^+$, independent of the input.  In each proof, however, the regular expressions for the rows are all different from each other.  Both results are therefore strengthened by Theorem~\ref{thm:NPC-fixed-C}, below, which is analogous to Theorem~\ref{thm:HP-fixed-C}.  First, a technical lemma.

\begin{lemma}\label{lem:SAT-to-RC}
There exist a polynomial $p$, a polynomial-time computable function $r$, and a positive regular expression $C'$ over $\Sigma$ such that, for any Boolean formula $\p$,
\begin{enumerate}
\item
$R' := r(\p)$ is a positive regular expression over $\{0,1\}$,
\item
$(R',C')$ is plural,
\item
every $(R',C')$-crossword is $q\times q$, where $q := q(|\p|)$, and
\item
the number of $(R',C')$-crosswords is equal to the number of satisfying truth assignments to $\p$.
\end{enumerate}
\end{lemma}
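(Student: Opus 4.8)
The plan is to adapt the tableau construction of Lemma~\ref{lem:HP-to-RC} from a single fixed computation to a \emph{clocked, nondeterministically guessed} verifier, so that the halting tableaux are in bijection with the satisfying assignments of $\p$ and, crucially, all of them have the same fixed square shape. First I would fix, once and for all, a deterministic Turing machine $V$ (independent of $\p$, and meeting the four structural assumptions of Lemma~\ref{lem:HP-to-RC}) that expects on its tape an encoding of a Boolean formula, followed by a block of $N$ assignment bits, followed by a counter initialized to a value determined by $|\p|$. The machine $V$ reads the formula and the guessed assignment, evaluates the formula, decrements the counter at each step, and is arranged so that it enters its halting state $\halt$ exactly when the counter reaches $0$ \emph{if and only if} the guessed assignment satisfies the formula; if the formula evaluates to false, $V$ instead enters a trap and loops forever, never reaching $\halt$. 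Since evaluating a Boolean formula under a fixed assignment is computable in time and space bounded by a polynomial $p$ in $|\p|$, I may pad $V$ so that, on every assignment for which it halts, its computation lasts exactly the same number of steps and sweeps exactly the same fixed set of tape cells. This obliviousness is precisely what will force all crosswords to share one size.

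Next I would apply the construction of Lemma~\ref{lem:HP-to-RC} to $V$, with a single change to the initial‑configuration expression: the cells encoding $\p$ and the counter are pinned to their correct markers, but each of the $N$ cells holding the assignment is left \emph{free}, i.e.\ allowed to be either $[0]$ or $[1]$. Call the resulting row expression over the marker alphabet $R_0$ and keep the column expression $C_0 := C(V)$, which depends only on $V$ and not on $\p$. The correctness argument of Lemma~\ref{lem:HP-to-RC} then goes through essentially verbatim: the static part of $C_0$ still forces the top row to be an initial configuration, the only freedom being the choice of assignment; the rest of the tableau is determined by $V$'s deterministic computation; and a crossword exists exactly when that computation halts. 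Hence the marker‑crosswords are in one‑to‑one correspondence with the assignments on which $V$ halts, which are precisely the satisfying assignments of $\p$, and each such assignment can be read back off the top row. This gives a bijection (property~4 at the marker level); obliviousness makes every tableau have the same number of rows and columns, and padding the scanned region to a square as in Section~\ref{sec:square} makes them all $q_0\times q_0$ for a fixed $q_0 = q_0(|\p|)$ (property~3); plurality (property~2) holds because $V$ runs for at least one step over at least two cells.

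Finally I would binarize. Since $R_0$ and $C_0$ are positive regular expressions over the fixed marker alphabet of some size $k$, I set $R' := f(k,R_0)$ and $C' := f(k,C_0)$ using the map $f$ of Lemma~\ref{lem:binary-alphabet}. By that lemma, $R'$ and $C'$ are positive expressions over $\{0,1\}$, and $\psi_k$ is a bijection between marker‑crosswords and binary crosswords carrying $(R_0,C_0)$‑crosswords to $(R',C')$‑crosswords, so the count is preserved (property~4), and $C'$ still depends only on $V$, so it is our fixed expression. The map $\psi_k$ multiplies both dimensions by the same factor and hence preserves squareness (Section~\ref{sec:square}), giving the fixed side length $q := q(|\p|) = 3k(q_0(|\p|)+1)+1$ (property~3); and $R' = r(\p)$ is computable in time polynomial in $|\p|$, since $R_0$ is (the formula encoding and the counter have polynomially bounded length and $f$ is polynomial), which is property~1. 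Plurality is preserved because the dimensions only grow.

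The main obstacle is the design of the fixed oblivious verifier $V$ in the first step; everything afterward is a fairly routine adaptation of Lemmas~\ref{lem:HP-to-RC} and~\ref{lem:binary-alphabet}. Properties~3 and~4 together are what make $V$ delicate: to force \emph{every} crossword to be exactly $q\times q$, the halting computation must consume the same number of steps and the same tape region for \emph{all} assignments to a given formula (hence the counter and the padding to a fixed swept region), while to keep the reduction parsimonious the assignment block must be the \emph{only} free part of the tableau and non‑satisfying guesses must yield \emph{no} crossword at all (hence a trap that loops rather than halts). Checking that a single fixed $V$ can meet both requirements simultaneously, and that the halting‑tableau characterization still admits no stray ``junk'' crossword once the top row is partly free, is where the care is concentrated.
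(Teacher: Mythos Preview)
Your proposal is correct and follows essentially the same route as the paper: modify the Lemma~\ref{lem:HP-to-RC} construction so that $M$ is a clocked verifier halting in exactly $p(|\p|)$ steps iff the guessed assignment satisfies $\p$, free up the assignment cells in $I_w$, use the $[B]^+$ trick of Section~\ref{sec:square} on $C$, and then binarize via Lemma~\ref{lem:binary-alphabet}. The only cosmetic differences are that the paper fixes the width directly by making $|I_w|=p(n)$ (rather than relying on head-obliviousness) and simply asserts the existence of an exactly-clocked $M$ rather than implementing it with an on-tape counter.
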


\begin{proof}
We modify slightly the proof of Lemma~\ref{lem:HP-to-RC} applied to a Turing machine $M$ such that, on any input $w$ of length $n$:
\begin{enumerate}
\item
$M$'s tape alphabet contains (at least) the nonblank symbols $0$ and $1$ and blank symbol $B$,
\item
$M$'s computation satisfies the technical conditions given at the start of that proof with respect to $w$,
\item
if $w$ encodes some Boolean formula $\p$ with variables $x_0,\ldots,x_{k-1}$ for some $k\le n$, then for any $a\in\{0,1\}^k$, with $wBa$ initially on its tape, $M$ scans $wBa$ in its entirety and halts if and only if $a$ is a satisfying truth assignment for $\p$, and
\item
if $M$ halts, then it halts after \emph{exactly} $p(n)-1$ many steps (thus including $p(n)$ many configurations), for some appropriately chosen polynomial $p$ with integer coefficients, independent of $w$, such that $p(n) \ge 2n+3$ for all $n\ge 0$.
\end{enumerate}
Such a machine $M$ and polynomial $p$ clearly exist.  Under these assumptions, we can change the definition of $I_w$ in Equation~\ref{eqn:initial-tape} to accommodate the presence of $a$ on the tape:
\[ I_w := \tmark{B}{q_1}[B,q_0][w_1]\cdots[w_n][B]([0]\cup[1])^k[B]^{p(n)-n-k-3}\;, \]
provided $w = w_1\cdots w_n$ encodes a Boolean formula with $k\le n$ variables.  Note that $I_w$ is only matched by strings of length $p(n)$.  The rest of the definition of $R$ remains the same.  We also modify $C$ just as we did in Section~\ref{sec:square}: $C'' := C \cup [B]^+$, where $C$ is as in the proof Lemma~\ref{lem:HP-to-RC}.  Under these modifications, both $R$ and $C''$ remain positive.  Now setting $p := p(n)$, we observe that for any $w$ encoding a Boolean formula $\p$ with $k\le n$ variables,
\begin{align*}
\mbox{$\p$ is satisfiable} &\iff \mbox{$M$ halts on $wBa$ for some $a\in\{0,1\}^k$} \\
&\iff \mbox{an $(R,C'')$-crossword exists,}
\end{align*}
and if such is the case, then owing to the determinism and running time of $M$, the $(R,C'')$-crossword is unique, is of size $p\times p$, and both $w$ and $a$ are easily recoverable from it, which implies that the number of $(R,C'')$-crosswords is equal to the number of satisfying assignments to $\p$.  Also by Lemma~\ref{lem:HP-to-RC}, given $\p$ we can compute $R$, $C$, and $0^p$ all in polynomial time.

Finally, we apply the function $f$ of Lemma~\ref{lem:binary-alphabet} to both $R$ and $C''$.  Let $\Sigma$ be the alphabet of $R$ and $C''$ (cf.\ Lemma~\ref{lem:HP-to-RC}).  By renaming if necessary, we may assume that $\Sigma = \{0,\ldots,\ell-1\}$ for some $\ell$.  Then we set
\begin{align*}
q &:= 3\ell(p+1)+1\;, \\
R' &:= f(\ell,R)\;, \\
C' &:= f(\ell,C'')\;, \\
r(\p) &:= R'\;.
\end{align*}
Any $(R',C')$-crossword thus has exactly $q = 3\ell(p+1)+1$ rows and columns.  The expressions $R'$ and $C'$ are both positive by Lemma~\ref{lem:binary-alphabet}, and so $(R',C')$ is plural, because $q\ge 2$.  Finally, since $f$ is polynomial-time computable (with constant $\ell$), so is $r$, and since $f$ preserves the number of crosswords, the number of $q\times q$ $(R',C')$-crosswords equals the number of $p\times p$ $(R,C'')$-crosswords, which equals the number of assignments satisfying $\p$.
\end{proof}

\begin{theorem}\label{thm:NPC-fixed-C}
For the regular expression $C'$ of Lemma~\ref{lem:SAT-to-RC}, the following decision problem $D$ is NP-complete (with respect to polynomial reductions):
\begin{quote}
Given as input a positive regular expression $R$ over the alphabet $\{0,1\}$ and a positive integer $q$ in unary, does an $q\times q$ $(R,C')$-crossword exist?
\end{quote}
\end{theorem}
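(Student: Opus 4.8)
The plan is to prove NP-completeness in the two standard parts: first that $D$ is in NP, which is straightforward, and then that $D$ is NP-hard, which follows almost immediately from Lemma~\ref{lem:SAT-to-RC}. The bulk of the difficulty has already been absorbed into Lemma~\ref{lem:SAT-to-RC} (and, beneath it, Lemmas~\ref{lem:HP-to-RC} and~\ref{lem:binary-alphabet}), so the present argument is essentially a clean wrap-up rather than a place where new combinatorial work arises.

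For membership in NP, I would take as the certificate an actual $q\times q$ array over $\{0,1\}$. Since $q$ is given in unary, this array has size $q^2$, polynomial in the length of the input, so it is a legitimate polynomial-size witness. To verify it, I convert $R$ and $C'$ into $\eps$-NFAs (of size polynomial in $|R|$ and $|C'|$) and then simulate each of the $q$ rows against $R$ and each of the $q$ columns against $C'$; each simulation runs in polynomial time by tracking a subset of the NFA's state set, exactly as in the proof of Theorem~\ref{thm:decidable-RC}. Thus the entire check is polynomial time, establishing $D\in\text{NP}$.

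For NP-hardness I would reduce from $3$-SAT. Given a Boolean formula $\varphi$, apply Lemma~\ref{lem:SAT-to-RC} to compute $R' := r(\varphi)$ together with the integer $q := q(|\varphi|)$, and output the $D$-instance $(R', 0^q)$; the fixed expression $C'$ is the one supplied by the lemma and does not depend on $\varphi$. The function $r$ is polynomial-time computable, and because $q = 3\ell(p+1)+1$ with $p$ a fixed polynomial in $|\varphi|$ and $\ell$ a constant alphabet size, $q$ is bounded by a fixed polynomial in $|\varphi|$; hence its unary encoding $0^q$ has polynomial length and is computable in polynomial time, so the whole map is a polynomial reduction. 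Correctness is then immediate from the lemma: by items~(3) and~(4), $\varphi$ is satisfiable if and only if at least one $(R',C')$-crossword exists, and every such crossword is $q\times q$, so this in turn holds if and only if a $q\times q$ $(R',C')$-crossword exists, i.e., if and only if $(R',0^q)\in D$.

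The only point requiring any care---and it is hardly an obstacle---is confirming that $q$ is polynomially bounded, since the unary dimension must simultaneously keep the certificate small (for the NP upper bound) and keep the reduction polynomial-time. This is exactly what the explicit form $q = 3\ell(p+1)+1$ guarantees. All the genuinely delicate content, namely encoding a halting SAT-checking computation as a crossword and then pushing it down to the binary alphabet while preserving the number of crosswords, has already been discharged in Lemmas~\ref{lem:SAT-to-RC} and~\ref{lem:binary-alphabet}, so no further difficulty remains at this stage.
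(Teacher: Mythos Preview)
Your proposal is correct and follows essentially the same approach as the paper: membership in NP via guessing the $q\times q$ array and checking each row and column in polynomial time, and NP-hardness via the map $\varphi \mapsto \tuple{r(\varphi),0^q}$ given directly by Lemma~\ref{lem:SAT-to-RC}. Your write-up is slightly more explicit than the paper's (spelling out the NFA simulation and the polynomial bound on $q$), but the argument is the same.
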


\begin{proof}
$D$ belongs to the class NP, because one can verify in deterministic polynomial time whether or not a given $q\times q$ crossword (which has size polynomial in $q$) is an $(R,C')$-crossword.  (It is well-known that the problem, ``Given a regular expression $R$ and string $w$, does $w$ match $R$?'' is decidable in polynomial time, uniformly in $R$ and $w$.)

For NP-hardness, let $q := q(n)$ be the polynomial and $r$ the function defined in the proof of Lemma~\ref{lem:SAT-to-RC}.  Then that lemma implies that the map taking a Boolean formula $\p$ to $\tuple{r(\p),0^q}$ where $q := q(|\p|)$ is a polynomial reduction from SAT to $D$.
\end{proof}

The next theorem is analogous to Theorem~\ref{thm:EE-is-undecidable}.

\begin{theorem}
The following decision problem $S$ is NP-complete (with respect to polynomial reductions):
\begin{quote}
Given a positive regular expression $E$ over the alphabet $\{0,1\}$ and positive integer $s$ in unary, does an $s\times s$ $(E,E)$-crossword exist?
\end{quote}
\end{theorem}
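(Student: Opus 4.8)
The plan is to mirror the structure of Theorem~\ref{thm:EE-is-undecidable}, replacing the Halting Problem with SAT and carefully tracking crossword dimensions so that a single parameter $s$ suffices. I would first dispatch membership in the class NP exactly as in Theorem~\ref{thm:NPC-fixed-C}: given an input $\tuple{E,0^s}$, a candidate $s\times s$ crossword has $s^2$ cells, which is polynomial in the input length because $s$ is presented in unary; one then verifies in polynomial time that each of its $s$ rows and $s$ columns matches $E$, using the fact that regular-expression matching is decidable in polynomial time uniformly in the expression and the string. Hence $S$ belongs to NP.

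For NP-hardness I would compose the three reduction lemmas already at hand. Starting from a Boolean formula $\p$, apply Lemma~\ref{lem:SAT-to-RC} to obtain positive expressions $R' := r(\p)$ and $C'$ over $\{0,1\}$ with $(R',C')$ plural, whose crosswords are \emph{exactly} $q\times q$ for $q := q(|\p|)$ and number exactly the satisfying assignments of $\p$. Next apply the function $b$ of Lemma~\ref{lem:HP-to-EE} to collapse the pair into a single expression $E := b(\{0,1\},R',C')$ over the five-letter alphabet $\Sigma' = \{0,1\}\cup\{\spadesuit,\heartsuit,\diamondsuit\}$; since $(R',C')$ is plural and its crosswords are square, the map $\rho$ sends every $q\times q$ $(R',C')$-crossword to a $(q+1)\times(q+1)$ $(E,E)$-crossword, and because the transpose of a square is a square of the same size, \emph{every} $(E,E)$-crossword is $(q+1)\times(q+1)$. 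Finally, renaming $\Sigma'$ to $\{0,1,2,3,4\}$ and applying the function $f$ of Lemma~\ref{lem:binary-alphabet} with $k=5$, set $E'' := f(5,E)$, a positive binary expression. By that lemma $\psi_5$ carries each $(q+1)\times(q+1)$ $(E,E)$-crossword to an $s\times s$ $(E'',E'')$-crossword with $s := 15(q+2)+1$, and every $(E'',E'')$-crossword arises this way, so all $(E'',E'')$-crosswords have size exactly $s\times s$.

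Chaining the three existence equivalences then yields: $\p$ is satisfiable if and only if a $q\times q$ $(R',C')$-crossword exists, if and only if an $(E,E)$-crossword exists, if and only if an $s\times s$ $(E'',E'')$-crossword exists. Thus the map $\p\mapsto\tuple{E'',0^s}$ is a reduction from SAT to $S$. To confirm it runs in polynomial time, note that $q$, and hence $s = 15(q+2)+1$, are fixed polynomials in $|\p|$, so $0^s$ can be written in unary in polynomial time; and $E''$ is polynomial-time computable because $r$ is (Lemma~\ref{lem:SAT-to-RC}), $b$ is (Lemma~\ref{lem:HP-to-EE}), and $f$ with the constant $k=5$ is (Lemma~\ref{lem:binary-alphabet}).

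Since the heavy lifting is entirely contained in the three cited lemmas, the main point requiring care is not a deep obstacle but the \emph{bookkeeping}: verifying that the two successive size transformations compose to a single value $s$ that is computable in unary within the time bound, and confirming that squareness survives each stage so that supplying only the one dimension $s$ (rather than separate row and column dimensions) is legitimate. Squareness is preserved through $\rho$ precisely because the $(R',C')$-crosswords are square, which Lemma~\ref{lem:SAT-to-RC} guarantees, and it is preserved through $\psi_k$ immediately from the size formula; in particular, the transpose ambiguity in the second part of Lemma~\ref{lem:HP-to-EE} is harmless here exactly because all intermediate crosswords are square.
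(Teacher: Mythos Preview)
Your proposal is correct and follows essentially the same route as the paper: membership in NP via guess-and-verify, then NP-hardness by composing Lemma~\ref{lem:SAT-to-RC}, Lemma~\ref{lem:HP-to-EE}, and Lemma~\ref{lem:binary-alphabet} in that order, with the same size parameters $q$, $q+1$, and $s=15(q+2)+1$. Your explicit discussion of why squareness survives each stage (and why the transpose ambiguity in Lemma~\ref{lem:HP-to-EE} is harmless) is a nice touch; the paper handles that point separately in Section~\ref{sec:square}.
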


\begin{proof}
$S$ is clearly in NP\@.  For NP-hardness, let $C'$, $q$, and $r$ be as in Lemma~\ref{lem:SAT-to-RC}, and let $b$ be the function of Lemma~\ref{lem:HP-to-EE}.  Given any Boolean formula $\p$, we show that we can produce in polynomial time a pair $\tuple{E,0^s}$ such that $\p$ is satisfiable if and only if an $s\times s$ $(E,E)$-crossword exists.  We do this by composing polynomial-time functions as follows:
\begin{enumerate}
\item
Let $R' := r(\p)$ and $q := q(|\p|)$.  Then, as in the proof of Theorem~\ref{thm:NPC-fixed-C}, a $q\times q$ $(R',C')$-crossword exists if and only if $\p$ is satisfiable.  Furthermore, $(R',C')$ is plural by Lemma~\ref{lem:SAT-to-RC}.
\item
Let $E' := b(\{0,1\},R',C')$.  Then by Lemma~\ref{lem:HP-to-EE}, $E'$ is a positive regular expression over the five-letter alphabet $\Sigma' := \{0,1,\heartsuit,\diamondsuit,\spadesuit\}$ such that a $(q+1)\times (q+1)$ $(E,E)$-crossword exists if and only if a $q\times q$ $(R',C')$-crossword exists.
\item
By renaming letters, we can assume that $\Sigma' = \{0,1,2,3,4\}$.  Then let $E := f(5,E')$, where $f$ is the function of Lemma~\ref{lem:binary-alphabet}, and let $s := 15(q+2)+1$.  Then by Lemma~\ref{lem:binary-alphabet}, $E$ is a positive regular expression over the alphabet $\{0,1\}$, and an $s\times s$ $(E,E)$-crossword exists if and only if a $(q+1)\times (q+1)$-crossword exists, if and only if a $q\times q$ $(R',C')$-crossword exists, if and only if $\p$ is satisfiable.
\end{enumerate}
The map $\p \mapsto \tuple{E,0^s}$ is thus a polynomial reduction from SAT to $S$.
\end{proof}

\begin{corollary}
For the regular expression $C'$ of Lemma~\ref{lem:SAT-to-RC}, the following two decision problems are NP-complete:
\begin{quote}
Given a positive regular expression $R$ over the alphabet $\{0,1\}$ and positive integers $m$ and $n$ in unary, does an $m\times n$ $(R,C')$-crossword exist?
\end{quote}
\begin{quote}
Given a positive regular expression $E$ over the alphabet $\{0,1\}$ and positive integers $m$ and $n$ in unary, does an $m\times n$ $(E,E)$-crossword exist?
\end{quote}
\end{corollary}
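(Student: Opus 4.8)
The plan is to deduce both statements almost immediately from the square-crossword theorems just proved, since the expressions involved force every crossword to be square. For each of the two problems I would proceed in two steps: membership in NP, then NP-hardness.

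Membership in NP is routine. Because $m$ and $n$ are presented in unary, a candidate $m\times n$ array over $\{0,1\}$ has $mn$ entries, polynomial in the input length. A nondeterministic machine guesses such an array and checks that each of its $m$ rows matches $R$ (respectively $E$) and each of its $n$ columns matches $C'$ (respectively $E$). Since matching a string against a given regular expression is decidable in polynomial time uniformly in both, all $m+n$ checks run in polynomial time, so both problems lie in NP.

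For NP-hardness, the key point is that the constructions behind Theorem~\ref{thm:NPC-fixed-C} and the preceding $(E,E)$ theorem already yield expressions whose only crosswords are square. Concretely, by Lemma~\ref{lem:SAT-to-RC} every $(R',C')$-crossword is $q\times q$ with $q := q(|\p|)$, so an $m\times n$ $(R',C')$-crossword exists if and only if $m = n = q$ and $\p$ is satisfiable. Hence $\p\mapsto\tuple{r(\p),0^q,0^q}$ is a polynomial reduction from SAT to the first rectangular problem. For the second problem I would reuse the same composition of $r$, $b$, and $f$ as in the $(E,E)$ theorem, which forces every $(E,E)$-crossword to have the single dimension $s := 15(q+2)+1$; thus $\p\mapsto\tuple{E,0^s,0^s}$ reduces SAT to it. Equivalently, one can simply observe that setting $m = n$ exhibits each square problem as a special case of the corresponding rectangular one, so that the NP-hardness established earlier transfers directly.

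I expect no genuine obstacle here: all the difficulty was already absorbed into Lemma~\ref{lem:SAT-to-RC} and the two square-crossword theorems. The only thing to state carefully is the rigidity direction---that with these particular expressions a rectangular crossword can exist only when $m$ and $n$ both equal the forced dimension---which is precisely what those earlier results guarantee.
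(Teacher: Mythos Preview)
Your proposal is correct and matches the paper's implicit reasoning: the paper states this corollary without proof, evidently regarding it as immediate from the two preceding theorems, and your ``special case'' observation (that the square problems trivially reduce to the rectangular ones via $\tuple{R,0^q}\mapsto\tuple{R,0^q,0^q}$) is exactly the intended one-line justification. Your more explicit rigidity argument---that the constructed expressions admit only crosswords of the forced square dimension---is also sound and simply unpacks why the reduction works.
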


\subsection{Further results}

Since Lemma~\ref{lem:SAT-to-RC} controls not just the existence but the \emph{number} of crosswords, we can get more information out of it.  We list a few other results here that follow easily from Lemma~\ref{lem:SAT-to-RC}.
\begin{itemize}
\item
Counting the number of $(R,C)$-crosswords of given dimensions (given in unary) is polynomially equivalent to counting the number of satisfying assignments to a Boolean formula, and hence is complete for the class $\#P$ \cite{Valiant:NumP}.
\item
As with sudoku puzzles, someone who wants to solve a regex crossword puzzle (found online or in a newspaper, say) should reasonably expect that a solution exists and is unique.  Does the promise of a unique solution make solving the puzzle any easier in the worst case?  The answer is no, at least with respect to randomized polynomial reductions.  Consider the following search problem:
\begin{quote}
\underline{Input:} Regular expressions $R$ and $C$, and integers $m,n\ge 1$ in unary.\\
\underline{Promise:} A unique $m\times n$ $(R,C)$-crossword exists.\\
\underline{Ouput:} The $m\times n$ $(R,C)$-crossword.
\end{quote}
Lemma~\ref{lem:SAT-to-RC} and its proof says that this problem is polynomially equivalent to finding the unique satisfying assignment to a Boolean formula $\p$ with the promise that $\p$ is uniquely satisfiable.  This latter problem is known to be NP-hard with respect to randomized polynomial reductions \cite{VV}.
\item
Shifting perspective from the last item, a regex crossword puzzle \emph{maker} may want a test to determine, given regular expressions $R$ and $C$ and $m,n\ge 1$ in unary, whether or not a unique solution exists.  Lemma~\ref{lem:SAT-to-RC} says that this is polynomially equivalent to USAT, the language of all uniquely satisfiable Boolean formulas.  USAT is known to be NP-hard (it is in the class $\textup{D}^p$, the first level of the difference hierarchy over NP).
\end{itemize}

Finally, our techniques can be modified easily to show that if the dimensions of the crossword are both given in \emph{binary} instead of unary, then the $(R,C)$-crossword existence problem is complete for NEXP (nondeterministic exponential time) under polynomial reductions.  If one of the dimensions is given in unary and the other in binary, then the problem becomes PSPACE-complete.  (PSPACE-hardness follows from the techniques of Section~\ref{sec:complexity}; membership in PSPACE follows by modifying slightly the proof of Theorem~\ref{thm:decidable-RC}.)

\section{Open questions}
\label{sec:open}

We have shown that it is NP-hard to determine whether an $(R,C)$-crossword exists of given dimensions (specified in unary), even when $R$ and $C$ are over the binary alphabet.  Our reduction from SAT is rather complicated and indirect, however.  It would be nice to know if a simple, direct reduction from some NP-complete problem exists---perhaps some modification of one of the reductions given in Appendix~\ref{sec:NP-hard}.

Theorem~\ref{thm:HP-fixed-C} gives undecidability for a particular fixed expression $C$.  One may ask more generally: For which $C$ is the corresponding problem undecidable?  How hard is it to determine, given a $C$, whether the corresponding problem is decidable?  We conjecture that this latter question is m-complete for $\Sigma_3$, the third $\Sigma$-level of the arithmetic hierarchy (see, e.g., \cite{soare87}).  Similar questions can be asked about Theorem~\ref{thm:NPC-fixed-C}.  For example: For which $C$ is the question (i) NP-hard; (ii) decidable in polynomial time?

\subsection{Two-player regex crossword games}

One can imagine a variety of two-player games involving regex crosswords, and some of these may actually be fun to play.  For example:
\begin{enumerate}
\item
A blank $m\times n$ grid is given to start, along with regular expressions $R_1,\ldots,R_m$ and $C_1,\ldots,C_n$.  Player~1 (who plays rows) fills in the first row to match $R_1$, then Player~2 (who plays columns) fills in the rest of the first column so that it matches $C_1$, then Player~1 fills in the rest of row~2 so that it matches $R_2$, then Player~2 column~2, etc.
\item
Same as above, but each player can choose an incomplete row (respectively column) to fill in on each turn.
\item
Same as in item~1 above, but both players fill in rows in order, and a move is legal iff each column can be completed to match its corresponding $C_j$ (this may or may not be easy to determine).
\item
Same as in the last item, but players can choose rows to fill in on each turn.
\end{enumerate}
In all these games, the last player able to make a legal move wins.  We conjecture that for all these games, determining whether Player~1 has a winning strategy is PSPACE-hard, even if all the $R_i$ are equal and all the $C_j$ are equal and independent of the input, or if all the $R_i$ and $C_j$ are equal to each other.  (It is straightforward to prove that all these problems are in PSPACE.)

One might also consider some unbounded versions of these games:
\begin{enumerate}
\item
Positive regular expression $R$ and $C$ are given, but the size of the grid is not.  Player~1 first chooses an \emph{arbitrary} string $r_1$ matching $R$ for the first row of the grid (thus fixing the number of columns).  Player~2 then chooses an arbitrary string $c_1$ matching $C$ for the first column of the grid (except the first symbol of $c_1$ must equal that of $r_1$), thus fixing the number of rows.  Players then proceed as in the games mentioned previously.
\item
Same as the last item, but on their first move, each player chooses a string $r$ (respectively $c$) and says which row (respectively column) this string is to fill.
\end{enumerate}
The first two moves in each of these games is unbounded, but thereafter, the grid dimensions are fixed, and so determining the winner under optimal play is decidable, \emph{given the first two moves}.  The problem of determining if Player~1 wins \emph{without} knowing the first two moves is then in the class $\Sigma_2$, the second $\Sigma$-level of the arithmetic hierarchy (i.e., it is c.e.\ relative to the Halting Problem).  We conjecture that it is m-complete for this class.

\section*{Acknowledgments}

The author would like to thank Josh Cooper for introducing him to the subject by showing him the three-way regex crossword in \cite{Black:regex-crossword}.  He also thanks Jason O'Kane for first suggesting to him the NP-completeness question for regex crosswords as an exercise.  Finally, much of this work was done at the Dagstuhl seminar 14391, ``Algebra in Computational Complexity.''  The author wishes to thank the organizers of that seminar and especially Thomas Thierauf and the Technical University of Ulm (Germany) for their hospitality and lively discussions on this and other topics.  Thanks also to Klaus-J\"orn Lange for providing pointers to the literature on two-dimensional languages.

\bibliography{../bib/master}

\appendix

\section{Easy polynomial reductions from NP-complete problems}
\label{sec:NP-hard}

\begin{theorem}[FrankW \cite{Takahashi:regex-crossword}]
The following decision problem is NP-hard: ``Given lists of regular expressions $\tuple{R_1,\ldots,R_m}$ and $\tuple{C_1,\dots,C_n}$, all over the binary alphabet $\{0,1\}$, does there exist an $m\times n$ array of $0$'s and $1$'s whose $i$th row matches $R_i$ for all $1\le i\le m$ and whose $j$th column matches $C_j$ for all $1\le j\le n$?''
\end{theorem}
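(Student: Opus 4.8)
The plan is to give a direct polynomial reduction from CNF-SAT (3-SAT works verbatim). Let $\varphi$ be a Boolean formula in conjunctive normal form with variables $x_1,\ldots,x_k$ and clauses $\phi_1,\ldots,\phi_c$; we may assume $k,c\ge 1$. I would build a $c\times k$ crossword in which column $j$ encodes the truth value of variable $x_j$ and row $i$ certifies that clause $\phi_i$ is satisfied. The key device is to force every column to be \emph{constant}, so that all rows spell out one and the same length-$k$ binary string, which we read as the candidate truth assignment.

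Concretely, set $n:=k$ and take every column expression to be the single fixed expression $C_j:=0^+\cup 1^+$, which is matched exactly by the nonempty constant binary strings. In any solution this forces column $j$ to be all $0$'s or all $1$'s; reading its value as the truth value $a_j$ of $x_j$, every row then equals the common string $a_1\cdots a_k$. Set $m:=c$, and for each clause $\phi_i$ let the row expression be $R_i:=\bigcup_{\ell}\rho(\ell)$, where the union ranges over the literals $\ell$ of $\phi_i$ and $\rho(x_p):=(0\cup 1)^{p-1}1(0\cup 1)^{k-p}$, \ $\rho(\lnot x_p):=(0\cup 1)^{p-1}0(0\cup 1)^{k-p}$ (here $(0\cup 1)^{j}$ denotes $j$-fold concatenation). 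Each $\rho(\ell)$ is matched precisely by those length-$k$ strings that make literal $\ell$ true, so $R_i$ is matched exactly by the assignments satisfying $\phi_i$.

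With these definitions the correspondence is immediate in both directions. If $\varphi$ has a satisfying assignment $a$, filling every row of the $c\times k$ grid with $a_1\cdots a_k$ yields a solution: each column is constant (hence matches $C_j$), and each row matches its $R_i$ because $a$ satisfies $\phi_i$. Conversely, from any solution the column expressions force each column constant, so all rows equal a common string $a$; since row $i$ matches $R_i$, the assignment $a$ satisfies every clause, and $\varphi$ is satisfiable. The map sending $\varphi$ to the two lists $\langle R_1,\ldots,R_c\rangle$ and $\langle C_1,\ldots,C_k\rangle$ is computable in polynomial time, since each $R_i$ has size $O(k)$ and there are $c$ of them.

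I do not expect a genuine obstacle: allowing a distinct expression per row and per column makes the encoding loose, and essentially all the content is in verifying the two easy directions above. The only points needing a little care are the degenerate inputs (a formula with no clauses or no variables, mapped to a fixed yes/no instance) and confirming that the length constraints built into each $R_i$ are consistent with the grid being exactly $c\times k$. (An alternative reduction, due to FrankW, starts instead from VERTEX~COVER: one column selects the cover via an ``at most $t$ ones'' expression, while one column per edge, each equal to the fixed expression $0^*1(0\cup 1)^*$, enforces that every edge is covered; this has the further feature that all column expressions but the first coincide.)
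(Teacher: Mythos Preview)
Your proof is correct. The paper's proof of \emph{this} theorem, however, is the VERTEX~COVER reduction you sketch parenthetically at the end, not the SAT reduction: rows are indexed by vertices with $R_i := r_i1 \cup 0^*$ (where $r_i$ is the edge-incidence string of $v_i$), the first $n$ columns each carry the fixed expression $0^*1(0\cup 1)^*$ enforcing that every edge is covered, and one final column carries $(0^*1?)^k0^*$ enforcing the size bound. Your SAT reduction is instead essentially identical to the proof the paper gives for the \emph{next} theorem in the same appendix, which strengthens the present statement by making \emph{all} column expressions equal to a single fixed expression independent of the input (the paper uses $0^*\cup 1^*$; your $0^+\cup 1^+$ is equivalent here since columns are nonempty). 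So you have independently reproduced the author's own alternative reduction. The two approaches trade off uniformity in different directions: VERTEX~COVER makes all but one column expression identical while keeping the rows simple two-string languages; SAT makes every column expression identical and fixed, at the cost of row expressions that grow with the number of variables. Both are short and directly polynomial.
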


\begin{proof}
We describe a polynomial reduction from the NP-complete language VERTEX~COVER\@.  Let $\tuple{G,k}$ be an instance of VERTEX~COVER, where $G$ is a graph with $m$ vertices $v_1,\ldots,v_m$ and $n$ edges $e_1,\ldots,e_n$, and $k$ is a positive integer.  We define $C_1 := C_2 := \cdots := C_n := 0^*1(0\cup 1)^*$, which are matched by binary strings with at least one $1$.  We define $C_{n+1} := (0^*1?)^k0^*$, which is matched by binary strings with at most $k$ many $1$'s.  For $1\le i\le m$, let $r_i$ be the length $n$ string whose $j$th symbol is $1$ iff $v_i$ is an endpoint of edge $e_j$, and $0$ otherwise, then define $R_i := r_i1\cup 0^*$.  This construction can clearly be done in polynomial time.  Then we show that an $m\times(n+1)$ crossword exists where the $i$th row matches $R_i$ and the $j$th column matches $C_j$, for all $i,j$, if and only if $G$ has a vertex cover of size $\le k$.

To see this, first assume that $C$ is a vertex cover for $G$ of size $\le k$.  Then let the $i$th row be $r_i1$ if $v_i\in C$ and $0^{n+1}$ otherwise.  Then there is at least one $1$ in each of the first $n$ columns, because at least one endpoint of each edge is in $C$.  Also, there are at most $k$ many $1$'s in the $(n+1)$st column, because $|C|\le k$.  Thus all the row and column regular expressions are matched.  Conversely, suppose all the row and column expressions are matched.  Let $C := \{ v_i \mid \mbox{row~$i$ ends with $1$}\}$.  Then $|C|\le k$ due to the last column, and the $i$th row must be $r_i1$ for all $i$ such that $v_i\in C$.  Then $C$ is a vertex cover, because each of the first $n$ columns contains a $1$ and hence each edge has an endpoint in $C$.
\end{proof}

\begin{theorem}
The following decision problem is NP-complete: ``Given a list of regular expressions $\tuple{R_1,\ldots,R_m}$, all over the binary alphabet $\{0,1\}$, and a positive integer $n$ in unary, does there exist an $m\times n$ array of $0$'s and $1$'s, all of whose columns match $0^*\cup 1^*$ and whose $i$th row matches $R_i$ for all $1\le i\le m$?''
\end{theorem}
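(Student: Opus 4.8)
The plan is to prove membership in NP and NP-hardness separately, in both cases exploiting the rigidity forced by the fixed column expression $0^* \cup 1^*$.

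The crucial observation is that a binary string matches $0^* \cup 1^*$ exactly when it is constant (all $0$'s or all $1$'s). Hence in any valid array every column is constant, which forces $x_{i,j} = x_{i',j}$ for all rows $i,i'$; that is, every row is one and the same string $w \in \{0,1\}^n$. Conversely, stacking $m$ identical copies of any $w \in \{0,1\}^n$ automatically produces an array whose columns all match $0^* \cup 1^*$. Thus the two-dimensional question collapses to a one-dimensional one: does there exist a single string $w$ of length exactly $n$ with $w \in \bigcap_{i=1}^m L(R_i)$? Membership in NP is then immediate, since such a $w$ is a witness of length $n$, which is polynomial in the input because $n$ is given in unary, and one checks $w \in L(R_i)$ for each $i$ in polynomial time using the standard polynomial-time regular-expression matching algorithm.

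For NP-hardness I would reduce from 3-SAT. Given a formula $\varphi$ with clauses $c_1,\ldots,c_m$ over variables $x_1,\ldots,x_n$, I interpret a length-$n$ string $w = w_1\cdots w_n$ as the truth assignment setting $x_j$ true iff $w_j = 1$. For a literal I define the ``position test'' expressions $P(x_j) := (0\cup 1)^{j-1}\,1\,(0\cup 1)^{n-j}$ and $P(\overline{x_j}) := (0\cup 1)^{j-1}\,0\,(0\cup 1)^{n-j}$, each of size polynomial in $n$, so that a length-$n$ string matches $P(\ell)$ precisely when the literal $\ell$ is satisfied by the corresponding assignment. For each clause $c_i = (\ell_{i,1}\vee\ell_{i,2}\vee\ell_{i,3})$ I set $R_i := P(\ell_{i,1})\cup P(\ell_{i,2})\cup P(\ell_{i,3})$, so that a length-$n$ string matches $R_i$ iff it satisfies clause $c_i$. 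The reduction outputs the list $\langle R_1,\ldots,R_m\rangle$ together with $n$ in unary, and clearly runs in polynomial time. By the equivalence above, an $m\times n$ array exists iff some length-$n$ string $w$ lies in every $L(R_i)$, iff the assignment coded by $w$ satisfies every clause, iff $\varphi$ is satisfiable.

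I expect no serious obstacle once the column constraint is interpreted correctly; the single point requiring care is the opening observation that $0^* \cup 1^*$ forces all rows to be identical, thereby reducing the grid problem to asking for one common string in an intersection of regular languages at a fixed length. After that, the clause-to-expression construction is just the standard encoding of a satisfying assignment as a binary string, and verifying correctness amounts to matching up ``literal satisfied'' with ``position test matched.'' (I note in passing that this reduction is parsimonious, so the same construction would also yield $\#\textup{P}$-hardness of the corresponding counting problem, though that is not needed here.)
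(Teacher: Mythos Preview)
Your proposal is correct and takes essentially the same approach as the paper: both reduce from 3-SAT by encoding a truth assignment in the (constant) columns and making each $R_i$ a union of three ``position test'' expressions corresponding to the literals of clause $c_i$. Your explicit observation that the column constraint forces all rows to coincide, reducing the grid problem to membership of a single length-$n$ string in $\bigcap_i L(R_i)$, is a clean way to frame both the NP membership and the hardness direction, but it is exactly the mechanism underlying the paper's argument as well.
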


\begin{proof}
The problem is in NP because testing whether a given string $w$ matches a given regular expression $E$ can be done in polynomial time, uniformly in $|\tuple{w,E}|$.  To show NP-hardness, we reduce from $3$-SAT\@.  Given a $3$-cnf Boolean formula $\p = c_1 \wedge \cdots\wedge c_m$ over Boolean variables $x_1,\ldots,x_n$, we construct an instance $\tuple{R_1,\ldots,R_m}$ so that any $0$--$1$ array satisfying the criterion is $m\times n$ and encodes the truth value of each $x_j$ with respect to some satisfying assignment in the $j$th-column.  Each $R_i$ ensures that the $i$th clause is satisfied by the assignment.

If the $j$th column matches $C := 0^*\cup 1^*$, then it is either all $1$s, meaning $x_j$ is set to TRUE, or all $0$s, meaning $x_j$ is set to FALSE\@.  For $1\le i\le m$, suppose $c_i = \ell_{i_1} \vee \ell_{i_2} \vee \ell_{i_3}$, where $1\le i_1 < i_2 < i_3 \le n$ and each literal $\ell_k$ is either $x_k$ or $\overline{x_k}$.  For $j=1,2,3$, set $b_j := 1$ if $\ell_{i_j} = x_{i_j}$ and $b_j := 0$ otherwise.  Then finally set
\[ R_i = (0\cup 1)^{i_1-1}b_1(0\cup 1)^{n-i_1} \cup (0\cup 1)^{i_2-1}b_2(0\cup 1)^{n-i_2} \cup (0\cup 1)^{i_3-1}b_3(0\cup 1)^{n-i_3} \;. \]
This construction can clearly be done in polynomial time.

$R_i$ is matched by any string that contains, at position $i_1$ or $i_2$ or $i_3$, a truth value satisfying the corresponding literal.  $C$ guarantees that the truth values are consistent across all clauses.  Thus such an array exists if and only if $\p$ is satisfiable.
\end{proof}

\end{document}